\theoremstyle{plain}
\newtheorem{theorem}{Theorem}[section]
\theoremstyle{plain}
\newtheorem{theorem-seq}{Theorem}
\newcommand\tnote[1]{}
\newcommand\inote[1]{}
\newcommand{\eqdef}{\stackrel{\rm{def}}{=}}
\DeclareMathOperator{\poly}{poly}
\newcommand{\E}{\mathbb{E}}
\newcommand{\V}{\mathbb{V}}
\newcommand{\N}{\mathbb{N}}
\newcommand{\F}{\mathbb{F}}
\newcommand{\C}{\mathcal{C}}
\def\eps{\varepsilon}
\newcommand{\NP}{\mathcal{NP}}
 \theoremstyle{plain}
 \theoremstyle{plain}
 \newtheorem{corollary}[theorem]{Corollary}  
 \theoremstyle{definition}
 \newtheorem{definition}[theorem]{Definition}
 \theoremstyle{plain}
 \newtheorem{lemma}[theorem]{Lemma}  
 \theoremstyle{plain}
 \theoremstyle{definition}
 \theoremstyle{plain}
 \newtheorem{proposition-seq}[theorem-seq]{Proposition}
 \theoremstyle{plain}
 \theoremstyle{plain}
 \newtheorem*{lemma*}{Lemma} 
 \theoremstyle{definition}
 \theoremstyle{plain}
 \newtheorem*{theorem*}{Theorem}
 \theoremstyle{plain}
 \newtheorem{claim}[theorem]{Claim}
\def\ccc{{$c^3$}}
\begin{document}

\title{Dense locally testable codes cannot have constant rate and distance}

\author{Irit Dinur%
\thanks{Weizmann Institute of Science, ISRAEL. Email: \texttt{irit.dinur@weizmann.ac.il}.
Research supported in part by the Israel Science Foundation and by
the Binational Science Foundation and by an ERC grant.%
} \and Tali Kaufman%
\thanks{Bar-Ilan University and the Weizmann Institute of Science, ISRAEL. Email: \texttt{kaufmant@mit.edu} }}
\maketitle

\begin{abstract}

A $q$-query locally testable code (LTC) is an error correcting code that can be tested by a randomized algorithm that reads at most $q$ symbols from the given word.
An important question is whether there exist LTCs that have the \ccc~property: {\underline c}onstant relative rate, {\underline c}onstant relative distance, and that can be tested with a {\underline c}onstant number of queries. Such codes are sometimes referred to as ``asymptotically good''.

We show that {\em dense} LTCs cannot be \ccc.
The {\em density} of a tester is roughly the average number of distinct local views in which a coordinate participates. An LTC is {\em dense} if it has a tester with density $\omega(1)$.

More precisely, we show that a $3$-query locally testable code with a tester of density $\omega(1)$ cannot be \ccc. Moreover, we show that a $q$-locally testable code ($q>3$) with a tester of density $\omega(1)n^{q-2}$ cannot be \ccc. Our results hold when the tester has the following two properties: \begin{itemize}
  \item (no weights:) Every $q$-tuple of queries occurs with the same probability.
  \item (`last-one-fixed':) In every `test' of the tester, the value to any $q-1$ of the symbols determines the value of the last symbol. (Linear codes have constraints of this type).
\end{itemize}
We also show that several natural ways to quantitatively improve our results would already resolve the general \ccc~question, i.e. also for non-dense LTCs.
\end{abstract}

\inote{
TEXT ABSTRACT

A q-query locally testable code (LTC) is an error correcting code that can be tested by a randomized algorithm that reads at most q symbols from the given word.
An important question is whether there exist LTCs that have the ccc-property: *c*onstant relative rate, *c*onstant relative distance, and that can be tested with a *c*onstant number of queries. Such codes are sometimes referred to as ``asymptotically good''.

We show that *dense* LTCs cannot be ccc.
The *density* of a tester is roughly the average number of distinct local views in which a coordinate participates. An LTC is *dense* if it has a tester with density >> 1.

More precisely, we show that a 3-query locally testable code with a tester of density >> 1 cannot be ccc. Moreover, we show that a q-query locally testable code (q>3) with a tester of density >> n^{q-2} cannot be ccc. Our results hold when the tester has the following two properties:
1) "no weights":  Every q-tuple of queries occurs with the same probability.
2) "last-one-fixed": In every `test' of the tester, the value to any q-1 of the symbols determines the value of the last symbol. (Linear codes have constraints of this type).

We also show that several natural ways to quantitatively improve our results would already resolve the general ccc-question, i.e. also for non-dense LTCs.}

\global\long\def\set#1{\left\{  #1\right\}  }
\global\long\def\eqdef{\stackrel{{\rm def}}{=}}
\global\long\def\NP{\mathbf{NP}}
\global\long\def\aand{\hbox{\, and\,}}
\global\long\def\N{\mathbb{N}}
\global\long\def\span{\mbox{span}}
\global\long\def\unsat{{\rm UNSAT}}
\def\sat{{\rm SAT}}
\global\long\def\l{{\rm left}}
\global\long\def\r{{\rm right}}
\global\long\def\E{\mathbb{E}}
\global\long\def\V{\mathbb{V}}
\global\long\def\ekw{\varepsilon_{{\rm DP}}}
\global\long\def\ep{\varepsilon_{P^{2}}}
\global\long\def\eaps{\varepsilon_{aP^{2}}}
\global\long\def\ekw{\varepsilon_{{\rm IKW}}}
\global\long\def\eps{\varepsilon}
\global\long\def\ap{\alpha_{P^{2}}}
\global\long\def\akw{\alpha_{\mbox{IKW}}}
\global\long\def\aaps{\alpha_{aP^{2}}}

\global\long\def\cp{c_{P^{2}}}
\global\long\def\caps{c_{aP^{2}}}
\global\long\def\ckw{c_{\mbox{IKW}}}
\global\long\def\haps{h_{aP^{2}}}
\global\long\def\apx#1{\stackrel{#1}{\approx}}
\global\long\def\napx#1{\stackrel{#1}{\not\approx}}
\global\long\def\card#1{\left|#1\right|}
\global\long\def\K{\mathcal{K}}
\global\long\def\F{\mathbb{F}}
\global\long\def\poly{{\rm poly}}
\global\long\def\DB{\mathcal{\mathcal{DB}}}
\global\long\def\DBs{\DB_{\Lambda,t,l}}
\global\long\def\DBf{\DB_{\F,t,q-1}}
\global\long\def\D{\mathcal{D}}
\def\e0{\eps_0}
\global\long\def\sett#1#2{\left\{  #1\left|\;\vphantom{#1#2}\right.#2\right\}  }
\newcommand\remove[1]{{}}
\def\bits{{\set{0,1}}}
\def\dist{{\rm dist}}
\def\d{d}
\def\sd{{d^{1/2}}}
\def\b{{\beta}}

\section{Introduction}
An error correcting code is a set $\C\subset \Sigma^n$. The rate of the code is $\log\card \C / n$ and its (relative) distance is the minimal Hamming distance between two different codewords $x,y\in \C$, divided by $n$.
We only consider codes with distance $\Omega(1)$.

A code is called
{\em locally testable} with $q$ queries if it has a {\em tester}, which is a randomized algorithm with oracle access to the received word $x$. The tester reads at most $q$
symbols from $x$ and based on this local view decides if $x \in \C$ or
not. It should accept codewords with probability one, and reject words
that are far (in Hamming distance) from the code with noticeable
probability. The tester has parameters $(\tau,\eps)$ if \[
\forall x\in \Sigma^n ,\;\; \dist(x,\C) \ge \tau \qquad \Longrightarrow \qquad \Pr[\hbox{Tester rejects }x] \ge \eps\]

Locally Testable Codes (henceforth, LTCs) are studied extensively in recent years. A priori, even the existence of LTCs is not trivial. The Hadamard code is a celebrated example of an LTC, yet it is highly ``inefficient'' in the sense of having very low rate ($\log n/n$).
Starting with the work of Goldreich and Sudan~\cite{GS}, several more efficient constructions of LTCs have been given.  The best known rate for LTCs is $1/\log^{O(1)} n$, and these codes have $3$-query testers~\cite{BS05,Dinur,OrMeir}. The failure to construct \ccc-LTCs leads to one of the main open questions in the area: are there LTCs that are \ccc, i.e. {\underline c}onstant rate {\underline c}onstant distance and testable with a {\underline c}onstant number of queries (such codes are sometimes called in the literature ``asymptotically good''). The case of two queries has been studied in \cite{BSGS}. However, the case of $q>3$ is much more interesting and still quite open.

\paragraph{Dense testers.}
In this work we make progress on a variant of the \ccc~question. We show that LTCs with so-called dense testers, cannot be \ccc.

The density of a tester is roughly the average number, per-coordinate, of distinct local views that involve that coordinate. More formally, every tester gives rise to a constraint-hypergraph $H=([n],E)$ whose vertices are the $n$ coordinates of the codeword, and whose hyperedges correspond to all possible local views of the tester. Each hyperedge $h\in E$ is also associated with a constraint, i.e. with a Boolean function $f_h:\Sigma^q\to\bits$ that determines whether the tester accepts or rejects on that local view. For a given string $x\in \Sigma^n$, we denote by $x_h$ the substring obtained by restricting $x$ to the coordinates in the hyperedge $h$. The value of $f_h(x_h)$ determines if the string $x$ falsifies the constraint or not.

\begin{definition}[The test-hypergraph, density]
Let $C\subseteq \Sigma^n$ be a code, let $q\in \mathbb{N}$ and $\eps>0$.

Let $H$ be a constraint hyper-graph with hyperedges of size at most $q$. $H$ is an $(\epsilon,\tau)$-test-hypergraph for $\C$ if
\begin{itemize}
\item For every $x\in \C$ and every $h\in E$, $f_h(x_h) = 1$
\item For every $x\in \Sigma^n$, \[
\dist(x,\C)\ge\tau\quad\Rightarrow\quad \Pr_{h\in E}[f_h(x_h)=0] \ge \epsilon
    \] where $\dist(x,y)$ denotes relative Hamming distance, i.e., the fraction of coordinates on which $x$ differs from $y$.
\end{itemize}

Finally, the {\em density} of $H$ is simply the average degree, $\card E / n$.
\end{definition}

The hypergraph is equivalent to a tester that selects one of the hyperedges uniformly at random. Observe that we disallow weights on the hyperedges. This will be discussed further below.

Goldreich and Sudan~\cite{GS} proved that every tester with density $\Omega(1)$ can be made into a ``sparse'' tester with density $O(1)$ by randomly eliminating each hyper-edge with suitable probability.
This means that a code can have both dense and sparse testers at the same time. Hence, we define a code to have {\em density} $\ge d$ if it has a tester with density $d$. In this work we show that the existence of certain dense testers restricts the rate of the code.

We say that an LTC is {\em sparse} if it has no tester whose density is $\omega(1)$. We do not know of any LTC that is sparse. Thus, our work here provides some explanation for the bounded rate that known LTCs achieve.

In fact, one wonders whether density is an inherent property of LTCs. The intuition for such a claim is that in order to be locally testable the code seems to require a certain redundancy among the local tests, a redundancy which might be translated into density. If one were to prove that every LTC is dense, then it would rule out, by combination with our work, the existence of \ccc-LTCs.

In support of this direction we point to the work of the second author with co-authors (Ben-Sasson et al~\cite{BGKSV}) where it is shown that every linear LTC (even with bounded rate) must have some non-trivial density. I.e. they show that no linear LTC can be tested only with tests that from a basis to the dual code. Namely some constant density is required in every tester of an LTC.

\subsection{Our results}
We bound the rate of LTCs with dense testers. We only consider testers whose constraints have the ``last-one-fixed'' (LOF) property, i.e.  that the value to any $q-1$ symbols determine the value of the last symbol. Note for instance that any linear constraint has this property.

We present different bounds for the case $q=3$ and the case $q>3$ where $q$ denotes the number of queries.

\begin{theorem}\label{thm:q3}
Let $C\subseteq\bits ^n$ be a $3$-query LTC with distance $\delta$, and let $H$ be an $(\delta/3,\eps)$-test-hypergraph with density $\d$ and LOF constraints. Then, the rate of $C$ is at most $O(1 /\sd)$.
\end{theorem}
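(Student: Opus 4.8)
\textbf{Proof proposal for Theorem~\ref{thm:q3}.}

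\emph{Step 1: LOF constraints are affine.} The first observation is that over $\F_2$ a constraint on three bits with the LOF property is essentially linear. Indeed, if $f_h:\bits^3\to\bits$ has the property that any two of the three values determine the third, then the set $f_h^{-1}(1)$ is the graph of a (partial) Latin square of order $2$, and its affine hull in $\F_2^3$ has codimension at least $1$. The only full Latin squares of order $2$ are $x_i+x_j+x_k=0$ and $x_i+x_j+x_k=1$, so each hyperedge $h=\{i,j,k\}$ contributes at least one affine form $\ell_h$ of Hamming weight $\le 3$ that vanishes (after the affine shift) on all of $\C$. Collecting these forms into a matrix $A\in\F_2^{(\ge \d n)\times n}$, each row of weight $\le 3$, we get $\C\subseteq \C'\eqdef\set{x:Ax=b}$, a coset of $L\eqdef\ker A$. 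In particular $\card{\C}\le\card{\C'}=2^{\dim L}$, so it suffices to bound $\dim L$ by $O(n/\sd)$ --- more precisely to bound the ``used'' dimension of $L$, see Step~3.

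\emph{Step 2: soundness gives $\C$ a net property and $L$ a weight gap.} Every $x\in\C'$ satisfies all tests, so $\Pr_{h}[f_h(x_h)=0]=0<\eps$, hence by the test-hypergraph property $\dist(x,\C)<\delta/3$; i.e.\ $\C$ is a $(\delta/3)$-net inside $\C'$. Applying this to $x=c+u$ for an arbitrary $c\in\C$ and $u\in L$: there is $c'\in\C$ with $\dist(c+u,c')<\delta/3$, and then $\dist(c,c')<\card u/n+\delta/3$. Since the minimum relative distance is $\delta$ and $\delta/3<2\delta/3$, if $\card u/n<2\delta/3$ we must have $c'=c$, whence $\card u/n=\dist(c,c+u)<\delta/3$. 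Therefore $L$ has no nonzero vector of Hamming weight in the band $[\,\delta n/3,\ 2\delta n/3\,)$. This weight gap, together with the net property, is the only way soundness and distance will be used.

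\emph{Step 3: reducing to a ``nice'' hypergraph.} A vector $u\in L$ of weight $w$ supported on (the complement of) a connected component of the hypergraph $H$ that has fewer than $\delta n$ vertices would force the restriction of $\C$ to that component to have relative distance $>1$, hence to be a single point --- so such components contribute nothing to $\card\C$. Thus we may assume $H$ is connected with $N\ge\delta n$ vertices; inside it the density is still $\Theta(\d)$, there are at most $1/\delta$ such components, and it remains to show that within such a component $\log\card\C=O(N/\sd)$.

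\emph{Step 4: the density--dimension bound (the main step).} Here is where the $\sd$ must appear. Using the LOF relations one propagates determinations: fix an ordering of the coordinates and call a coordinate \emph{determined} if some hyperedge has its other two coordinates earlier; the number of free coordinates under the best ordering upper-bounds $\dim L$ (equivalently the size of a minimum percolating set for the ``$2$-out-of-$3$'' bootstrap process on $H$). The claim is that with $\Theta(\d N)$ hyperedges one can always reveal only $O(N/\sd)$ coordinates and propagate to all the rest. The quantity $\sd$ is the break-even point of a two-sided count: a revealed set $S$ triggers on the order of $\d N\cdot(\card S/N)^2$ hyperedges, which begins to outpace $\card S$ precisely once $\card S\gtrsim N/\sd$, and the weight gap from Step~2 is what rules out the adversarial configurations (large ``closed'' sets of coordinates resisting propagation) by identifying them with mid-weight vectors of $L$, which are forbidden. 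Summing over components gives $\log\card\C\le \sum_a O(N_a/\sd)+O(1/\delta)=O(n/\sd)$, i.e.\ rate $O(1/\sd)$.

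\emph{Main obstacle.} Steps~1--3 are routine; the real content is Step~4 --- turning ``$\d n$ redundant sparse linear constraints plus a weight gap'' into ``at most $2^{O(n/\sd)}$ codewords''. A purely linear-algebraic count does not suffice (a subspace can be rich in low-weight vectors yet have large codimension), so the weight gap and connectivity must be used quantitatively, and the delicate point is to control how adversarially placed hyperedges can slow the propagation while still being consistent with a code of distance $\delta$ and soundness $\eps$. I expect this is where the bulk of the paper's work, and the precise form of the $\sd$ accounting, lies.
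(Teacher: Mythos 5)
Your Steps 1--3 are consistent with what the paper does (the paper explicitly assumes linear constraints, invoking LOF only to justify this), and your high-level intuition for Step 4 --- a reveal-and-propagate process whose break-even is at $n/\sqrt{d}$ --- is indeed the right shape of argument. But Step 4 is the entire content of the theorem, and you have not proven it; you acknowledge this yourself. The gap is genuine and your sketch of how to close it does not match the mechanism the paper actually uses.

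Concretely, the paper does not argue via your ``weight gap'' from Step~2, and does not reduce to connected components as in your Step~3. Instead it runs a \emph{randomized, iterated} revealing algorithm: in each round it independently puts each still-undetermined coordinate into a reveal set $S$ with probability $1/\sqrt{d}$, then closes under two propagation rules (a coordinate becomes determined if some hyperedge has its other two endpoints determined, or if it lies in a sufficiently large connected component of the ``projected'' $2$-graph $G_B$, in which case one representative is revealed). The $\sqrt{d}$ appears through a second-moment (Chebyshev) concentration argument: for a vertex $v$, the paper looks at the set $E_v$ of $2$-edges $\{u,u'\}$ with $\{u,u',v\}\in E(H)$ and $u,u'$ of low multiplicity, and shows that if $\card{E_v}\ge\alpha d$ then with probability $\ge 1/2$ the revealed set $S$ creates $\ge\beta\sqrt{d}$ new edges of $G_B$ incident to $v$, forcing $v$ into $B$. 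Soundness enters not through a weight gap on the kernel but through a hybrid-word argument (Lemma~\ref{lemma:Z}): if too many vertices have small $\card{E_v}$ one can build a word $w$ at distance $\ge\delta/3$ from $\C$ whose rejecting hyperedges concentrate on a small set, giving some vertex with $\card{E_v}\ge\alpha d$, contradiction. Your one-shot heuristic (``$dN\cdot(\card S/N)^2$ hyperedges triggered'') also glosses over the fact that those triggered hyperedges can collapse onto few distinct new determinations --- this is exactly the ``multiplicity class'' complication (coordinates that agree on every codeword) that the paper spends real effort controlling via the set $V^*$ and the parameter $\beta\sqrt d$, and which you do not address at all. So: right intuition for where $\sqrt{d}$ comes from, but the claimed route (weight gap + adversarial-configuration exclusion) is not worked out and is not the paper's route; the key lemma is missing.
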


For the case of $q>3$ queries we have the following result

\begin{theorem}\label{thm:q}
Let $C\subseteq\bits ^n$ be a $q$-query LTC with distance $\delta$, and let $H$ be an $(\delta/2,\eps)$-test-hypergraph with density $\Delta$, where $\Delta =d n^{q-2}$, and LOF constraints. Then, the rate of $C$ is at most $O(1/d)$.
\end{theorem}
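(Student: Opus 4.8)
The plan is to bound the rate by exhibiting a small \emph{resolving set}: a set $S\subseteq[n]$ with $|S|=O(n/d)$ and $|C|\le 2^{|S|}$. The LOF property is what makes this work: if two codewords agree on $S$ then, by induction, they agree on the whole \emph{closure} $\mathrm{cl}(S)$, obtained by repeatedly adjoining a coordinate $v$ whenever some hyperedge $h\ni v$ already has all its other $q-1$ coordinates in the current set (LOF then forces the value at $v$). Hence two codewords agreeing on $S$ differ only on $B:=[n]\setminus\mathrm{cl}(S)$, and if $|B|<\delta n$ they are equal by the code's distance. Aiming only at $|B|<\delta n$ would force $|S|=\Omega(n)$ in the worst case, so the first key step is a \emph{refined} reduction that uses soundness: it suffices that $\mathrm{cl}(S)$ contains all but fewer than $\epsilon|E|$ of the hyperedges. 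Indeed, suppose $x\neq x'\in C$ agree on $S$; then $D:=\{i:x_i\neq x'_i\}\subseteq B$ and $|D|\ge\delta n$. Let $\tilde x$ agree with $x$ off $D$ and with $x'$ on a subset $D'\subseteq D$ of size $\approx\delta n/2$, so $\mathrm{dist}(\tilde x,C)\ge\delta/2$ (up to rounding), which is exactly the test threshold. Every hyperedge violated by $\tilde x$ meets $D'\subseteq B$, and since $\mathrm{cl}(S)$ is closed no hyperedge meets $B$ in exactly one coordinate, so every violated hyperedge has at least $2$ coordinates in $B$. Soundness gives at least $\epsilon|E|$ violated hyperedges, contradicting the assumption; hence no such bad pair exists and $|C|\le 2^{|S|}$.

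Second, I would build such an $S$ greedily. Maintain $S$, $A=\mathrm{cl}(S)$, $B=[n]\setminus A$; while at least $\epsilon|E|$ hyperedges have $\ge 2$ coordinates in $B$, find a small set $L\subseteq B$ with $|L|\le q-1$ whose addition to $S$ drags $\Omega(d)$ new coordinates into the closure. Granting this, since each round spends at most $q-1$ coordinates of $S$ and resolves $\Omega(d)$ coordinates (which leave $B$ forever), the process must halt after $O(n/d)$ coordinates have been added; at that point $\mathrm{cl}(S)$ misses fewer than $\epsilon|E|$ hyperedges, so by the refined reduction $|C|\le 2^{|S|}$ and the rate is $\le|S|/n=O(1/d)$, with the constant depending on $q,\delta,\epsilon$.

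The heart of the argument is the greedy step. Among the $\ge\epsilon|E|$ hyperedges with $\ge 2$ coordinates in $B$, partition by $k:=|h\cap B|\in\{2,\dots,q\}$; some $k^\ast$ accounts for at least $\epsilon|E|/(q-1)$ of them. Averaging over the $\binom{|B|}{k^\ast-1}$ subsets of $B$ of that size, some $L\subseteq B$ with $|L|=k^\ast-1$ lies in $\Omega\!\big(d\,n^{q-1}/|B|^{k^\ast-1}\big)$ such hyperedges; each has the form $L\cup\{u\}\cup P$ with $u\in B\setminus L$ and $P\subseteq A$, $|P|=q-k^\ast$, so once $L$ is added to $S$ the $q-1$ coordinates $L\cup P$ are resolved and $u$ is forced. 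The crucial point bounding redundancy is that a \emph{fixed} $k^\ast$-set $L\cup\{u\}$ is contained in at most $\binom{n-k^\ast}{q-k^\ast}=O(n^{q-k^\ast})$ hyperedges, so the number of \emph{distinct} resolved $u$'s is at least $\Omega\!\big(d\,n^{q-1}/|B|^{k^\ast-1}\big)/O(n^{q-k^\ast})=\Omega\!\big(d\,(n/|B|)^{k^\ast-1}\big)\ge\Omega(d)$, using $|B|\le n$; this is exactly where the density hypothesis $|E|=d\,n^{q-1}$ enters, the powers of $n$ cancelling precisely. The two places I expect to require care are getting the Step-1 reduction right — this is what makes soundness usable, since a reduction targeting only $|B|<\delta n$ is too weak when $d\ll n$ — and the bookkeeping in the greedy step, in particular the innocuous-looking but essential bound $\binom{n-k^\ast}{q-k^\ast}$ on the number of hyperedges through a fixed $k^\ast$-set; the $O(1/n)$ rounding in the choice of $|D'|\approx\delta n/2$ is routine.
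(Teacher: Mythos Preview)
Your argument is correct and noticeably cleaner than the paper's. Both proofs build a small set $F$ (your $S$) such that the LOF-closure of $F$ pins down every codeword, but the mechanisms differ. The paper runs a \emph{randomized} process: in each ``$S$-step'' every remaining vertex flips $q-2$ biased coins, and the analysis (Lemma~3.2) shows that any vertex of $H$-degree $\ge\alpha\Delta$ enters the closure with probability $\ge 1/2$. This requires a descending chain $N_{q-1}(v)\supseteq N_{q-2}(v)\supseteq\cdots\supseteq N_1(v)$, a notion of ``heavy'' $i$-edges, and a Chebyshev/variance argument at each level; combined with the bound $|L|<\delta n/2$ on low-degree vertices one gets halting after $O(\log(1/\delta))$ rounds.

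You instead run a \emph{deterministic greedy} procedure: as long as $\ge\eps|E|$ hyperedges have at least two coordinates in $B$, pigeonhole on the size $k^*=|h\cap B|$ and then on $(k^*-1)$-subsets of $B$ produces some $L$ of size $\le q-1$ whose addition forces $\Omega(d)$ new coordinates. The only quantitative ingredient is the trivial bound $\binom{n-k^*}{q-k^*}=O(n^{q-k^*})$ on the number of hyperedges through a fixed $k^*$-set, which makes the powers of $n$ cancel against $|E|=dn^{q-1}$. Your termination criterion (fewer than $\eps|E|$ hyperedges meet $B$ twice) plus closure (no hyperedge meets $B$ exactly once) lets soundness finish the job via the hybrid $\tilde x$. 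This is more elementary: no randomness, no second-moment computation, no connected-component bookkeeping.

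What the paper's approach buys is that the same randomized template, with a more delicate analysis of multiplicity classes and component sizes, also yields the stronger $O(1/\sqrt d)$ bound for $q=3$ (Theorem~1.1). Your greedy averaging, as written, gives only $O(1/d)$ when the density is $dn^{q-2}$, which for $q=3$ is weaker; extracting the square-root improvement would require a different idea. For Theorem~1.2 itself, however, your route is a genuine simplification.
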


\paragraph{Extensions.}
In this preliminary version we assume that the alphabet is Boolean, but the results easily extend to any finite alphabet $\Sigma$. It may also be possible to get rid of the ``last-one-fixed'' restriction on the constraints, but this remains to be worked out.

\paragraph{Improvements.}
We show that several natural ways of improving our results will already resolve the `bigger' question of ruling out \ccc-LTCs.

\begin{itemize}
\item In this work we only handle {\em non-weighted} testers, i.e., where the hyper-graph has no weights. In general a tester can put different weights on different hyperedges. This is sometimes natural when combining two or more "types" of tests each with certain probability. This limitation cannot be eliminated altogether, but may possibly be addressed via a more refined definition of density. See further discussion Section~\ref{sec:nonweighted}.
\item In Theorem~\ref{thm:q3} we prove that $\rho \le O(1/d^{0.5})$. We show that any improvement of the $0.5$ exponent (say to $0.5+\eps$) would again rule out the existence of \ccc-LTCs, see Lemma~\ref{lemma:expo}
\item In Theorem~\ref{thm:q} we bound the rate only when the density is very high, namely, $\omega(n^{q-2})$. We show, in Lemma~\ref{lemma:q}, that any bound for density $O(n^{q-3})$ would once more rule out the existence of \ccc-LTCs. It seems that our upper bound of $\omega(n^{q-2})$ can be made to meet the lower bound, possibly by arguments similar to those in the proof of Theorem~\ref{thm:q3}
\end{itemize}

\paragraph{Related work.}
In the course of writing our result we have learned that Eli Ben-Sasson and Michael Viderman have also been studying the connection between density and rate and have obtained related results, through seemingly different methods.

\tnote{add refs}
\inote{split general q theorem to two thms: one for testers and one for codes with many high degree vertices in a constraint hyper-graph}

\section{Moderately dense $3$-query LTCs cannot be \ccc}

In this section we prove Theorem~\ref{thm:q3} which we now recall:\\


\noindent{\bf Theorem~\ref{thm:q3}.~}{\em
Let $C\subseteq\bits ^n$ be a $3$-query LTC with distance $\delta$, and let $H$ be an $(\delta/3,\eps)$-test-hypergraph with density $\d$ and LOF constraints. Then, the rate of $C$ is at most $O(1 /\sd)$.}\\

In order to prove the main theorem, we consider the hypergraph $H = (V,E(H))$ whose vertices are the coordinates of the code, and whose hyper-edges correspond to the different tests of the tester. By assumption, $H$ has $\d n$ distinct hyper-edges. We describe an algorithm in Figure~\ref{fig:alg} for assigning values to coordinates of a codeword, and show that a codeword is determined using $k = O(\frac n{\sd})$ bits.

We need the following definition. For a partition $(A,B)$ of the vertices $V$ of $H$, we define the graph $G_B = (A, E)$ where \[E = \sett{\set{a_1,a_2} \subset A}{\exists b\in B,\; \set{a_1,a_2,b}\in E(H)}.\]
A single edge $\set{a_1,a_2}\in E(G_B)$ may have more than one ``preimage'', i.e., there may be two (or more) distinct vertices $b,b'\in B$ such that both hyper-edges $\set{a_1,a_2,b},\set{a_1,a_2,b'}$ are in $H$. For simplicity we consider the case where the constraints are linear\footnote{More generally, when the constraints are LOF the set of all such $b$'s can be partitioned into all those equal to $w_b$ and all those equal to $1-w_b$.} which implies that for every codeword $w\in C$:  $w_{b} = w_{b'}$. This is a source of some complication for our algorithm, which requires the following definition.
\begin{definition}
Two vertices $v,v'$ are {\em equivalent} if \[\forall w\in C,\qquad w_v = w_{v'}.\] Clearly this is an equivalence relation. A vertex has {\em multiplicity $m$} if there are exactly $m$ vertices in its equivalence class. The reader is invited to assume, at first read, that all multiplicities are $1$.

Denote by $V^*$ the set of vertices whose multiplicity is at most $\b \sd$ for $\b=\alpha/16$.
\end{definition}

\begin{figure}[h]
\centering \fbox{%
\begin{minipage}[c]{6in}%
 \centering
\begin{enumerate}
\item[0.] {\bf Init:} Let $\alpha = 3\eps/\delta$ and fix $\b = \alpha/16$. Let $B$ contain all vertices with multiplicity at least $\b \sd$. Let $F$ contain a representative from each of these multiplicity classes. Let $B$ also contain all "fixed" vertices (whose value is the same for all codewords).
\item\label{step:clean} {\bf Clean:} Repeat the following until $B$ remains fixed:
\begin{enumerate}
\item\label{step:cleana} Add to $B$ any vertex that occurs in a hyper-edge that has two endpoints in $B$.
\item\label{step:cleanb} Add to $B$ all vertices in a connected component of $G_B$ whose size is at least $\b \sd$, and add an arbitrary element in that connected component into $F$.
\item\label{step:cleanc}  Add to $B$ any vertex that has an equivalent vertex in $B$.
\end{enumerate}

\item\label{step:S}  {\bf $S$-step:} Each vertex outside $B$ tosses a biased coin and goes into $S$ with probability $1/\sd$.
\remove{    The coin tosses are not fully independent, rather, they are conditioned on sending at most one vertex into $S$ from each multiplicity class.

    More explicitly: for each multiplicity class the decision is independent, and within a multiplicity class of size $t$ with probability $1-t/\sd$ no vertex is selected, and with the remaining probability a random one of the $t$ elements are selected.
}
    Let $B\leftarrow B\cup S$ and set $F\leftarrow F \cup S$.

\item If there are at least two distinct $x,y\in C$ such that $x_B=y_B$ goto step~\ref{step:clean}, otherwise halt.
\end{enumerate}
\end{minipage}} \caption{\label{fig:alg}The Algorithm}
\end{figure}

The following lemma is easy.
\begin{lemma}\label{lemma:bound}
If the algorithm halted, the code has at most $2^{\card F}$ words.
\end{lemma}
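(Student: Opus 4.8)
The claim is that if the algorithm halts, then $|\C| \le 2^{|F|}$. The natural way to prove this is to show that when the algorithm halts, every codeword is completely determined by its restriction to the set $F$ (more precisely, by the values it assigns to the representatives collected in $F$). Since $F \subseteq \bits^{|F|}$ possibilities exhaust all assignments to $F$, this immediately bounds $|\C|$ by $2^{|F|}$.

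So the first step is to observe that the halting condition is exactly the statement ``there are no two distinct $x,y \in \C$ with $x_B = y_B$'', i.e. the restriction map $\C \to \bits^{B}$, $x \mapsto x_B$, is injective. Hence $|\C| \le 2^{|B|}$, but of course $|B|$ is too large; the real content is that $x_B$ is itself determined by $x_F$. The second step is therefore to argue that for every codeword $x$, the values $\{x_b : b \in B\}$ are forced once we know $\{x_f : f \in F\}$. This is where the structure of the "Clean" step and the maintenance of $F$ come in: each time a vertex is added to $B$, either (a) it lies in a hyperedge with two endpoints already in $B$, so by the LOF/linearity property its value is determined by those two (which are inductively determined by $F$); or (b) it lies in a connected component of $G_B$ of size $\ge \beta\sd$, and a representative of that component was put into $F$ — and within such a component, knowing the value at one vertex propagates along edges of $G_B$ (each edge $\{a_1,a_2\}$ of $G_B$ comes from a hyperedge $\{a_1,a_2,b\}$ with $b \in B$ whose value is already determined, so the LOF constraint ties $x_{a_1}$ to $x_{a_2}$); or (c) it is equivalent to a vertex already in $B$, so its value equals that vertex's value by definition of equivalence; or it is one of the initial high-multiplicity vertices (a representative of which is in $F$) or an $S$-vertex (which is added directly to $F$). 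In each case, the newly added $B$-vertex's value is a function of values already pinned down by $F$.

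The clean way to organize this is induction on the order in which vertices enter $B$: maintain the invariant that for every $x \in \C$, $x_v$ is a function of $x_F$ for every $v$ currently in $B$. The base case is the initialization: fixed vertices have a constant value, and high-multiplicity vertices are each represented in $F$ and are equivalent to their representative. The inductive step checks the three sub-steps of Clean and the $S$-step as above, using the LOF property precisely in sub-steps (a) and (b). Concluding: at halt time the map $x \mapsto x_B$ is injective on $\C$, and $x_B$ is a function of $x_F$, so $x \mapsto x_F$ is injective on $\C$, giving $|\C| \le 2^{|F|}$.

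**Main obstacle.** The only delicate point is sub-step (b): one must make sure that within a connected component of $G_B$, fixing the value at the chosen representative really does determine the values at all other component vertices, i.e. that the propagation along $G_B$-edges is consistent and total. This requires knowing that every edge of $G_B$ used in the propagation arises from a hyperedge whose third vertex is already in $B$ and hence already determined — which holds because $G_B$ is defined relative to the current $B$, and sub-step (a) has already absorbed into $B$ any vertex sitting in a hyperedge with two $B$-endpoints. One should also be a little careful that the multiplicity bookkeeping (sub-step (c) and the equivalence classes) does not let a value ``escape'' being determined; but since sub-step (c) adds all equivalent vertices and $F$ always contains a representative, this is routine. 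I would expect this lemma to be genuinely easy once the invariant is stated correctly, as the paper's "The following lemma is easy" suggests.
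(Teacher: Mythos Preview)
Your proposal is correct and follows exactly the same approach as the paper: the halting condition makes $x\mapsto x_B$ injective on $\C$, and an induction over the order in which vertices enter $B$ shows that $x_F$ determines $x_B$ via the LOF property and the equivalence relation. The paper's own proof is a two-sentence sketch of this same argument; your writeup simply unpacks the invariant and case analysis that the paper leaves implicit.
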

\begin{proof}
This follows since at each step setting the values to vertices in $F$ already fully determines the value of all vertices in $B$ (in any valid codeword). Once the algorithm halts, the values of a codeword on $B$ determines the entire codeword. Thus, there can be at most $2^{\card F} $ codewords.
\end{proof}

Let $B_t$ denote the set $B$ at the end of the $t$-th Clean step.  In order to analyze the expected size of $F$ when the algorithm halts, we analyze the probability that vertices not yet in $B$ will go into $B$ on the next iteration. For a vertex $v$, this is determined by its neighborhood structure. Let \[ E_v = \sett{\set{u,u'}}{u,u'\in V^*,\hbox{ and }\set{u,u',v}\in E(H)} \]
be a set of edges. Denote by $A$ the vertices $v$ with large $\card{E_v}$,
\[ A = \sett{v }{\card{E_v} \ge \alpha \d}.\]

The following lemma says that if $v$ has sufficiently large $E_v$ then it is likely to enter $B$ in the next round:
\begin{lemma}\label{lemma:alg}
For $t\ge 2$, if $v\in A$  then \[ \Pr_{S}[v\in B_t] \ge \frac 1 2.\]
\end{lemma}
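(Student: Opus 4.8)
The plan is to exploit the randomness of $S$ to force $v$ into $B_t$ either via Clean step~\ref{step:cleana} (some hyper-edge getting two endpoints in $B$) or via Clean step~\ref{step:cleanb} ($v$ landing in a connected component of $G_B$ of size at least $\b\sd$). Write $S$ for the set drawn in the $S$-step (step~\ref{step:S}) of round $t-1$, so that $B_t$ is the closure of $B_{t-1}\cup S$ under the rules~\ref{step:cleana}--\ref{step:cleanc}. If $v\in B_{t-1}$ or $v\in S$ then $v\in B_{t-1}\cup S\subseteq B_t$ and there is nothing to prove, so assume $v\notin B_{t-1}\cup S$. Since $t-1\ge 1$, the set $B_{t-1}$ is a fixed point of the Clean step, whence: (i) no hyper-edge through $v$ has two of its remaining endpoints in $B_{t-1}$ (step~\ref{step:cleana} at round $t-1$ would otherwise have added $v$); and (ii) $v$ has fewer than $\b\sd$ neighbours in $G_{B_{t-1}}$ (otherwise its component there has size at least $\b\sd$ and step~\ref{step:cleanb} at round $t-1$ would have added $v$). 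Among the hyper-edges $\{v,u,u'\}$ counted by $E_v$ (those with $u,u'\in V^*$), call one \emph{half-covered} if exactly one of $u,u'$ lies in $B_{t-1}$; by (i) the only alternative is that neither does. If $\{v,u,u'\}$ is half-covered with $u\in B_{t-1}$, then $\{u',v\}\in E(G_{B_{t-1}})$, so $u'$ is one of the fewer than $\b\sd$ $G_{B_{t-1}}$-neighbours of $v$, and for each such $u'$ all partners $u$ are pairwise equivalent and hence lie in one multiplicity class meeting $V^*$ in at most $\b\sd$ vertices; so there are at most $(\b\sd)^2$ half-covered hyper-edges, and
\[ P=\{\,\{u,u'\}\in E_v : u,u'\notin B_{t-1}\,\} \]
has $|P|\ge|E_v|-\b^2\d\ge\alpha\d-\b^2\d\ge\alpha\d/2$, using $v\in A$ and $\b=\alpha/16$.

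The key structural fact is that the graph $G_P$ on $V^*$ with edge set $P$ has maximum degree at most $\b\sd$. Indeed, fix a vertex $c$; every $z$ with $\{c,v,z\}\in E(H)$ takes the same value on every codeword, since the constraint determines $w_z$ from $w_c,w_v$ (this is where the last-one-fixed property — in particular linearity — is used; the general case gives at most two such classes). Hence all these $z$'s are pairwise equivalent, forming one multiplicity class of size at most $\b\sd$ if it meets $V^*$; so at most $\b\sd$ of them lie in $V^*$, which bounds the $E_v$-degree, and a fortiori the $G_P$-degree, of $c$.

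Now we analyse the $S$-step. If some pair $\{u,u'\}\in P$ has $u,u'\in S$, then $\{u,u',v\}$ has two endpoints in $B_{t-1}\cup S$ and step~\ref{step:cleana} puts $v$ into $B_t$. If instead $\{u,u'\}\in P$ has exactly one endpoint in $S$, say $u$, then $\{u',v\}$ becomes an edge of $G_{B_{t-1}\cup S}$ (its third vertex $u\in S$, while $u',v\notin B_{t-1}\cup S$); so if at least $\b\sd$ distinct $u'$ arise this way, $v$'s component has size at least $\b\sd$ and step~\ref{step:cleanb} puts $v$ into $B_t$. Since each pair of $P$ is ``split'' in this sense with probability $\Theta(1/\sd)$, the expected number of fresh neighbours delivered to $v$ is $\Omega(|P|/\sd)=\Omega(\alpha\sd)$ — a fixed constant factor (about $\alpha/\b=16$) above the threshold $\b\sd$. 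To upgrade this expectation bound to probability $\ge\tfrac12$ one uses the bounded degree of $G_P$: greedily extract from $P$ a union of vertex-disjoint stars covering a constant fraction of its $\ge\alpha\d/2$ edges, and split on their sizes. If a constant fraction of these edges lie in stars with $\ge\tfrac12\b\sd$ leaves, there are $\Omega(\sd)$ such stars, each independently having its centre fall into $S$ with probability $1/\sd$ and then contributing $\ge\tfrac13\b\sd$ distinct new neighbours of $v$; a Poisson-tail estimate shows at least three centres are hit with probability $>\tfrac12$, already giving $\ge\b\sd$ neighbours. Otherwise a constant fraction of the edges lie in small stars, the number of new neighbours of $v$ is then a sum of independent variables each bounded by $O(\b\sd)$ with mean $\Omega(\alpha\sd)$, and a Bernstein inequality — in which the choice $\b=\alpha/16$ is exactly what makes the exponent exceed $1$ — gives $\ge\b\sd$ neighbours except with probability $<\tfrac12$.

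The main obstacle is precisely this concentration step: the expected number of new neighbours exceeds the component threshold $\b\sd$ only by a fixed constant factor, and the per-pair contributions are merely ``almost'' independent (pairs of $P$ sharing a vertex are correlated) and can be lumpy when $G_P$ has high-degree vertices. The bounded-degree property established above is what controls both issues, and it is also the reason the same scale $\b\sd$ appears simultaneously in the definition of $V^*$ and in Clean step~\ref{step:cleanb}.
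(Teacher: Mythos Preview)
Your setup (the first three paragraphs) is correct and essentially matches the paper: you reduce to the event that $v$ acquires at least $\b\sd$ fresh $G_{B_{t-1}\cup S}$-neighbours, and you establish the key structural fact that the auxiliary graph on $E_v$ has maximum degree $\le \b\sd$ because all $E_v$-neighbours of a fixed vertex lie in a single multiplicity class. The paper does not bother pruning the ``half-covered'' edges, but your pruning is harmless.

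The gap is in your concentration step. The claim that one can ``greedily extract from $P$ a union of vertex-disjoint stars covering a constant fraction of its edges'' is false in general for bounded-degree graphs, and the special multiplicity structure does not save it. Take $G_P$ to be a disjoint union of $k$ copies of $K_{\b\sd,\b\sd}$, with each side of each copy being a single multiplicity class; this satisfies ``neighbours of any vertex lie in one multiplicity class'' and has max degree $\b\sd$ and $k\b^2 d$ edges (so $\ge \alpha d/2$ once $k$ is a suitable constant). Any vertex-disjoint star family inside one copy covers at most $\b\sd$ edges, hence the family covers at most a $1/(\b\sd)$ fraction overall --- not a constant. Your subsequent Poisson/Bernstein dichotomy therefore never gets off the ground, and the assertion that the choice $\b=\alpha/16$ ``is exactly what makes the exponent exceed $1$'' in Bernstein is not supported.

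The paper's route is simpler and avoids this trap: set $I_u=\mathbb{1}[\text{some }E_v\text{-neighbour of }u\text{ enters }S]$ and $I=\sum_u I_u$; then $\E[I]\ge \card{E_v}/\sd\ge \alpha\sd$. The crucial observation is that $I_u$ and $I_{u'}$ are independent unless $u,u'$ share an $E_v$-neighbour, which forces $u,u'$ into the same multiplicity class of size $\le \b\sd$. Grouping the covariance sum by multiplicity class gives $\mathrm{Var}[I]\le 2\b\card{E_v}$, and Chebyshev with $a=\E[I]/2$ yields $\Pr[I<\E[I]/2]\le 8\b/\alpha=1/2$ --- this is where $\b=\alpha/16$ actually enters. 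So the bounded-degree fact you proved is exactly the right ingredient, but it should feed a second-moment bound rather than a star decomposition.
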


Next, consider a vertex $v\not\in A$ that is adjacent, in the graph $G_{B_{t-1}}$, to a vertex $v'\in A$. This means that there is a hyper-edge $h=\set{v,v',b}$ where $b\in B_{t-1}$. If it so happens that $v'\in B_t$ (and the above lemma guarantees that this happens with probability $\ge \frac 1 2$), then the hyper-edge $h$ would cause $v$ to go into $B_t$ as well. In fact, one can easily see that if $v$ goes into $B_t$ then all of the vertices in its connected component in $G_{B_{t-1}}$ will go into $B_t$ as well (via step~\ref{step:cleana}). Let $A_t$ be the set of vertices outside $B_t$ that are in $A$ or are connected by a path in $G_{B_t}$ to some vertex in $A$. We have proved

\begin{corollary}\label{cor:alg}
For $t\ge 2$, let $v \in A_{t-1}$ then \[ \Pr_{S}[v\in B_t] \ge \frac 1 2. \]\qed
\end{corollary}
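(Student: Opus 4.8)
The plan is to reduce the statement to Lemma~\ref{lemma:alg} by a short ``propagation along a path'' argument. Fix $t\ge 2$ and $v\in A_{t-1}$. If $v\in A$ we are already done by Lemma~\ref{lemma:alg}, so assume $v\notin A$; by the definition of $A_{t-1}$ there is then a vertex $v'\in A$ and a path $v=u_0,u_1,\dots,u_k=v'$ in the graph $G_{B_{t-1}}$. Since every vertex of $G_{B_{t-1}}$ lies outside $B_{t-1}$, each $u_i\notin B_{t-1}$, and for each $i$ the edge $\{u_i,u_{i+1}\}\in E(G_{B_{t-1}})$ comes with a witness $b_i\in B_{t-1}$ such that $\{u_i,u_{i+1},b_i\}\in E(H)$. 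I would record these witnesses once and for all at the start.

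The core step is to show the event inclusion $\{v'\in B_t\}\subseteq\{v\in B_t\}$ (as events over the coin tosses of the $S$-step). For this I would use two facts about the set $B_t$: first, $B$ only grows, so $B_{t-1}\subseteq B_t$ and hence every witness $b_i$ is already in $B_t$; second, $B_t$ is the value of $B$ \emph{after} the repeat-loop of the Clean step has terminated, so $B_t$ is closed under step~\ref{step:cleana}, i.e.\ any vertex lying in a hyper-edge with two endpoints in $B_t$ is itself in $B_t$. Then a downward induction on $i$ from $k$ to $0$ finishes it: assuming $u_{i+1}\in B_t$, the hyper-edge $\{u_i,u_{i+1},b_i\}$ has the two endpoints $u_{i+1}$ and $b_i$ in $B_t$, so $u_i\in B_t$; starting from $u_k=v'\in B_t$ this yields $u_0=v\in B_t$.

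To conclude, I would apply Lemma~\ref{lemma:alg} to $v'$: since $v'\in A$ and $t\ge 2$, $\Pr_S[v'\in B_t]\ge\frac12$. Combined with the event inclusion above this gives $\Pr_S[v\in B_t]\ge\Pr_S[v'\in B_t]\ge\frac12$, which is exactly the claimed bound; the degenerate path ($k=0$, i.e.\ $v=v'\in A$) is just Lemma~\ref{lemma:alg} verbatim.

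The one place I expect to need care in writing this rigorously is the bookkeeping around the definition of $B_t$: that the path and its witnesses $b_i$ were obtained with respect to $B_{t-1}$ but are used against the larger set $B_t$ (fine because $B$ is monotone), and that $B_t$ genuinely denotes the \emph{fixed point} reached at the end of the $t$-th Clean step, which is what licenses the use of closure under step~\ref{step:cleana}. I would also remark that the LOF/multiplicity subtlety (several witnesses $b,b'$ over one pair) is irrelevant here, since a single witness per path-edge suffices.
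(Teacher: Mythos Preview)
Your proposal is correct and is essentially the same argument the paper gives in the paragraph immediately preceding the corollary: reduce to Lemma~\ref{lemma:alg} by picking a $G_{B_{t-1}}$-path from $v$ to some $v'\in A$ and propagating membership in $B_t$ back along the path via closure under step~\ref{step:cleana}. Your explicit remarks about monotonicity of $B$ and $B_t$ being a fixed point of the Clean loop are exactly the bookkeeping the paper leaves implicit.
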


\begin{lemma}\label{lemma:Z}
If the algorithm hasn't halted before the $t$-th step and $\card{A_{t}}< \frac \delta 2 n$ then the algorithm will halt at the end of the $t$-th step.
\end{lemma}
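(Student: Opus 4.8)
The plan is a proof by contradiction. Suppose the algorithm reaches the $t$-th step (it has not halted earlier) and $|A_t|<\tfrac\delta2 n$, yet the check at the end of step~\ref{step:S} fails at iteration $t$. I will then produce a word that is far from $C$ but violates very few constraints, contradicting the fact that $H$ is a $(\delta/3,\eps)$-test-hypergraph.

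Failure of the check means there are distinct $x,y\in C$ with $x_B=y_B$ for $B=B_t\cup S_t$; put $D=\{v:x_v\neq y_v\}$, so $|D|\ge\delta n$ by the distance of $C$, and $D\cap B_t=\emptyset$. The structural point is that $w:=x\oplus y$ is constant on every connected component of $G_{B_t}$: if $\{a_1,a_2\}\in E(G_{B_t})$ is witnessed by a hyperedge $\{a_1,a_2,b\}$ with $b\in B_t$, applying that (linear) constraint to $x$ and to $y$ and subtracting gives $w_{a_1}\oplus w_{a_2}\oplus w_b=0$, and $w_b=0$ since $b\in B$. As $w$ also vanishes on $B_t$, its support $D$ is a union of connected components of $G_{B_t}$.

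Now extract a ``good'' piece and flip it. A connected component of $G_{B_t}$ that meets $A$ is contained in $A_t$, so all such components together cover at most $|A_t|<\tfrac\delta2 n$ vertices; call the remaining components good, noting each is disjoint from $A$. Since $D$ is a union of components with $|D|\ge\delta n$, the good components contained in $D$ cover more than $\tfrac\delta2 n$ vertices, while each has size $<\b\sd$ by step~\ref{step:cleanb}. Greedily take the union $S$ of a sub-collection of these with $|S|\in[\lceil\delta n/3\rceil,\,\lceil\delta n/3\rceil+\b\sd)$; this is possible because the total available exceeds $\tfrac\delta2 n$ and each step overshoots by less than one component. Set $z:=x\oplus\mathbf 1_S$. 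Then $\dist(z,x)=|S|/n\ge\delta/3$, and for every codeword $c\neq x$, $\dist(z,c)\ge\dist(x,c)-|S|/n>\delta/3$ (for $n$ large, $|S|/n<\delta/2$); hence $\dist(z,C)\ge\delta/3$, and since $H$ is a $(\delta/3,\eps)$-test-hypergraph, at least $\eps|E(H)|=\eps\d n$ hyperedges are violated by $z$. On the other hand: $z$ agrees with $x$ off $S$ and $S\cap B_t=\emptyset$, so any hyperedge contained in $B_t$ is satisfied; by step~\ref{step:cleana}, every other hyperedge has at most one vertex in $B_t$. If $|h\cap B_t|=1$, its two other vertices lie in a common component of $G_{B_t}$, and since $S$ is a union of whole components, $|h\cap S|\in\{0,2\}$, so (linearity) $f_h(z_h)=f_h(x_h)=1$. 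If $|h\cap B_t|=0$, then $h\subseteq V\setminus B_t\subseteq V^*$, and a violated such $h$ contains some $v\in S$, whose other two vertices form a pair counted in $E_v$. Thus the number of violated hyperedges is at most $\sum_{v\in S}|E_v|$, and every $v\in S$ lies in a good component so $v\notin A$, giving $|E_v|<\alpha\d$. With $\alpha=3\eps/\delta$ this yields $\sum_{v\in S}|E_v|<|S|\cdot\alpha\d\approx\tfrac{\delta n}{3}\cdot\tfrac{3\eps\d}{\delta}=\eps\d n$, contradicting the lower bound; hence the algorithm halts at the end of the $t$-th step.

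The delicate point is the last estimate: $|S|$ can be made equal to $\delta n/3$ only up to an additive error of order $\b\sd$ (rounding, plus the need to flip whole components), inflating the bound by $O(\b\sd\cdot\alpha\d)$. This lower-order term is swallowed by $\eps\d n$ whenever $\sqrt\d=o(\delta n)$, i.e.\ $\d=o(n^2)$ — precisely the range in which the claimed bound $\rho\le O(1/\sd)$ is non-vacuous — and the choices $\tau=\delta/3$, $\alpha=3\eps/\delta$, $\b=\alpha/16$ are made so the leading terms cancel. A cleaner variant, when available, is to choose $S$ so that $\mathbf 1_S$ additionally satisfies every homogeneous constraint: then $z$ is itself a codeword at distance $|S|/n\in[\delta/3,\delta/2)<\delta$ from $x$, contradicting the distance of $C$ outright; but securing such an $S$ of the right size is the real heart of the matter.
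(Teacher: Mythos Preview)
Your proof follows essentially the same route as the paper's: take two codewords $x,y$ agreeing on $B_t$, observe that the disagreement set is a union of connected components of $G_{B_t}$, carve out a sub-union $S$ of size roughly $\delta n/3$ consisting of components disjoint from $A$, form the hybrid word, and argue that rejecting hyperedges must avoid $B_t$ entirely and hence contribute to some $E_v$ with $v\in S$. The paper phrases the hybrid as $w=x_{D}y_{V\setminus D}$ (which, since $D$ sits inside the disagreement set, is the same as your $x\oplus\mathbf 1_S$), and closes by \emph{averaging} to exhibit a single $v\in S$ with $\card{E_v}\ge \alpha d$, directly contradicting $v\notin A$; you instead \emph{sum} the $\card{E_v}$ over $S$ and compare totals. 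These are equivalent endgames, and the lower-order slack you flag (coming from the overshoot $<\beta\sqrt d$ when assembling $S$) is the same slack implicit in the paper's averaging step. One small sharpening: when you handle the case $\card{h\cap B_t}=1$ you invoke linearity, but in fact your $z$ restricted to such an $h$ coincides with either $x_h$ or $y_h$ in full (because $S\subseteq\{v:x_v\neq y_v\}$ and the $B_t$-vertex has $x_b=y_b$), so the conclusion holds for any LOF constraint, not just linear ones.
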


Before proving the two lemmas, let us see how they imply the theorem.
\begin{proof}(of theorem)~
For each $t\ge 2$, Corollary~\ref{cor:alg} implies that for each $v\in A_t$ half of the $S$'s put it in $B_t$. We can ignore the sets $S$ whose size is above $2\cdot n/\sd$, as their fraction is negligible. By linearity of expectation, we expect at least half of $A_t$ to enter $B_t$.  In particular, fix some $S_{t-1}$ to be an $S$ that attains (or exceeds) the expectation. As long as $\card{A_t}\ge \delta n /2$ we get
\[
\card{B_t} \ge \card{B_{t-1}} + \card{A_t}/2 \ge  \card{B_{t-1}} + \delta n/4 .\]
Since $\card{B_t} \le n$ after $\ell \le 4/\delta$ iterations when the algorithm runs with $S_1,\ldots,S_\ell$ we must have $\card{A_\ell} < \delta n/2$.  This means that the conditions of Lemma~\ref{lemma:Z} hold, and the algorithm halts.

How large is the set $F$? In each $S$-step the set $F$ grew by $\card S \le 2 n/\sd$ (recall we neglected $S$'s that were larger than that). The total number of vertices that were added to $F$ in $S$-steps is thus $O(\ell \cdot n/\sd)$.

Other vertices are added into $F$ in the init step and in step~\ref{step:cleanb}. In both of these steps one vertex is added to $F$ for every $\b\sd$ vertices outside $B$ that are added into $B$. Since vertices never exit $B$, the total number of this type of $F$-vertices is $n/(\b\sd)$.

Altogether, with non-zero probability, the final set $F$ has size $O(\frac {1}{\sd})\cdot n$. Together with Lemma~\ref{lemma:bound} this gives the desired bound on the number of codewords and we are done.
\end{proof}
We now prove the two lemmas.
\subsection {Proof of Lemma~\ref{lemma:alg}}
We fix some $v\in A$. If $v\in B_{t-1}$ then we are done since $B_t\supseteq B_{t-1}$. So assume $v\not\in B_{t-1}$ and let us analyze the probability of $v$ entering $B_t$ over the random  choice of the set $S$ at iteration $t-1$. This is dictated by the graph structure induced by the edges of $E_v$. Let us call this graph $G=(U,E_v)$, where $U$ contains only the vertices that touch at least one edge of $E_v$. We do not know how many vertices participate in $U$, but we know that $\card{E_v} \ge \alpha d$.

We begin by observing that all of the neighbors of $u$ must be in the same multiplicity class\footnote{Or, more generally for LOF constraints, in one of a constant number of multiplicity classes.}. Indeed each of the edges $\set{v,u,u_i}$ is a hyper-edge in $H$ and the value of $u_i$ is determined by the values of $v$ and $u$. Therefore, the degree in $G$ of any vertex $u\in U$ is at most $\b\sd$, since vertices with higher multiplicity are not in $V^*$ and therefore do not participate in edges of $E_v$.

For each $u\in U$ let $I_u$ be an indicator variable that takes the value $1$ iff there is a neighbor of $u$ that goes into $S$. If this happens then either
\begin{itemize}
\item $u\in S$: this means that $v$ has a hyperedge whose two other endpoints are in $B_t$ and will itself go into $B_t$ (in step~\ref{step:cleana}).
\item $u\not\in S$: this means that the graph $G_{B_t}$ will have an edge $\set{v,u}$.
\end{itemize}
If the first case occurs for any $u\in U$ we are done, since $v$ goes into $B_t$ in step~\ref{step:cleana}. Otherwise, the random variable $\sum_{u\in U}{I_u}$ counts how many distinct edges $\set{v,u}$ will occur in $G_{B_t}$. If this number is above $\b\sd$ then $v$ will go into $B_t$ (in step~\ref{step:cleanb}) and we will again be done.
It is easy to compute the expected value of $I$. First, observe that
\[\E[I_u] = 1- (1-1/\sd)^{deg(u)}\]
where $deg(u)$ denotes the degree of $u$ in $G$ and since the degree of $u$ is at most $\b\sd$, this value is between $deg(u)/2\sd$ and $deg(u)/\sd$. By linearity of expectation \[\E[I] =
\sum_u \E[I_u] \ge \sum_u deg(u)/2\sd = \card{E_v}\d^{-1/2} \ge \alpha \sd. \]
We will show that $I$ has good probability of attaining a value near the expectation (and in particular at least $\alpha \sd/2 \ge \b\sd$), and this will put $v$ in $B_t$ at step \ref{step:cleanb}. The variables $I_u$ are not mutually independent, but we will be able to show sufficient concentration by bounding the variance of $I$, and applying Chebychev's inequality.

The random variables $I_{u}$ and $I_{u'}$ are dependent exactly when $u,u'$ have a common neighbor (the value of $I_u$ depends on whether the neighbors of $u$ go into $S$). We already know that having a common neighbor implies that $u,u'$ are in the same multiplicity class. Since $U \subset V^*$, this multiplicity class can have size at most $\b\sd$. This means that we can partition the vertices in $U$ according to their multiplicity class, such that $I_u$ and $I_{u'}$ are fully independent when $u,u'$ are from distinct multiplicity classes. Let $u_1,\ldots,u_t$ be representatives of the multiplicity classes, and let $d_i \le \b \sd$ denote the size of the $i$th multiplicity class. Also, write $u\sim u'$ if they are from the same multiplicity class.

\begin{eqnarray*}
Var[I] = \E[I^2 ]-(\E[I])^2 &=& \E\sum_{u,u'} I_u I_{u'} - \sum_{u,u'} \E I_u \E I_{u'} \\
&=& \sum_{u \sim u'} \E [I_{u}I_{u'}] + \sum_{u \not\sim u'} \E I_{u}\E I_{u'}   - \sum_{u,u'} \E I_u \E I_{u'} \\
&\le& \sum_i\sum_{u\sim u_i}\sum_{u'\sim u_i}  \E I_{u}I_{u'} \\
&\le& \sum_i\sum_{u\sim u_i}\sum_{u'\sim u_i} \E I_{u}\cdot 1 \\
&\le& \sum_i\sum_{u\sim u_i} \E I_u \cdot d_i
\le \sum_i\sum_{u\sim u_i} \frac{deg(u)}{\sd} \cdot \b \sd \\
&=& \b\sum_u {deg(u)}= 2\b \card{E_v}
\end{eqnarray*}
By Chebychev's inequality, \[ \Pr[ \card{I-\E [I]} \ge a] \le Var[I]/a^2 \]
Plugging in $a = \E[I]/2$ we get
\[ \Pr\left[ \card{I-\E [I]} \ge \frac{\E[I]}2\right] \le \frac{Var[I]}{(\E[I]/2)^2}  \le
(2\b\card{E_v})\cdot ({(\frac 1 2 \card{E_v}\d^{-1/2})^2 })^{-1} \le 8\b\d/\card{E_v} \le 8\b/\alpha.
 \]
and so by choosing $\beta = \alpha / 16$ this probability is at most a half. Thus, the probability that $I \ge \E I /2\ge \alpha \sd/2$ is at least a half. As we said before, whenever $I\ge \b \sd$ we are guaranteed that $v$ will enter $B_t$ in the next Clean step~\ref{step:cleanb} and we are done. \qed

\subsection{Proof of Lemma~\ref{lemma:Z}}

We shall prove that if the algorithm hasn't halted before the $t$-th step and $\card{A_{t}} < \frac \delta 2 n$ then $\card{B_t} > (1-\delta) n$. This immediately implies that the algorithm must halt because after fixing values to more than $1-\delta$ fraction of the coordinates of a codeword, there is a unique way to complete it.

Recall that $A$ is the set of all vertices $v$ for which $\card{E_v} \ge \alpha\d$. The set $B_t$ is the set $B$ in the algorithm after the $t$-th Clean step. The set $A_t$ is the set of vertices outside $B_t$ that are connected by a path in $G_{B_t}$ to some vertex in $A$.
Finally, denote $G = G_{B_t}$.

Assume for contradiction that $\card{B_t} \le (1-\delta) n$ and $\card{A_t}< \delta n/2$. This means that $Z = V\setminus(A_t\cup B_t)$ contains more than $\delta n/2$ vertices. Since $Z\cap A=\phi$, every vertex $v\in Z$ has $\card{E_v}<\alpha d$. Out contradiction will come by finding a vertex in $Z$ with large $E_v$.
If the algorithm doesn't yet halt, there must be two distinct codewords $x,y \in C$ that agree on $B_t$. Let $U_{x\neq y} = \sett{u\in V}{x_v\neq y_v}$. This is a set of size at least $\delta n$ tht is disjoint from $B_t$. Since $\card{A_t}\le \delta n /2$ there must be at least $\delta n /2$ vertices in $Z\cap U_{x\neq y}$. Suppose $u\in Z\cap U_{x\neq y}$ and suppose $u'$ is adjacent to $u$ in $G$. First, by definition of $Z$, $ u\in Z$ implies $u'\in Z$. Next, we claim that $u\in U_{x\neq y}$ implies $u'\in U_{x\neq y}$. Otherwise there would be an edge $\set{u,u',b}\in E(H)$ such that $b\in B_t$, and such that $x_u\neq y_u$ but both $x_{u'}=y_{u'}$ and $x_b=y_b$. This means that either $x$ or $y$ must violate this edge, contradicting the fact that all hyper-edges should accept a legal codeword. We conclude that the set $Z\cap U_{x\neq y}$ is a union of connected components of $G$. Since each connected component has size at most $\b\sd$ (otherwise it would have gone into !
 $B$ in a previous Clean step) we can find a set $D\subset Z\cap U_{x\neq y}$ of size $s$, for \[ \frac \delta 3 n \le \frac \delta 2 n-\b\sd \le s \le \frac \delta 2 n,\] that is a union of connected components, i.e. such that no $G$-edge crosses the cut between $D$ and $V\setminus D$.
Now define the hybrid word
\[ w = x_{D}y_{V\setminus{D}}\]
that equals $x$ on $D$ and $y$ outside $D$.
We claim that $dist(w,C) = dist(w,y) = \card D/n\ge \delta/3$. Otherwise there would be a word $z\in C$ whose distance to $w$ is strictly less than $\card D /n \le \delta /2$ which, by the triangle inequality, would mean it is less than $\delta n $ away from $y$ thereby contradicting the minimal distance $\delta n $ of the code.

Finally, we use the fact that $C$ is an LTC, \[ \dist(w,C)\ge \delta/3\qquad\Longrightarrow\qquad Prob_{h\sim E(H)}[h \hbox{ rejects }w] \ge \eps.\]
Clearly to reject $w$ a hyperedge must touch $D$. Furthermore, such a hyperedge cannot intersect $B$ on $2$ vertices because then the third non-$B_t$ vertex also belongs to $B_t$. It cannot intersect $B_t$ on $1$ vertex because this means that either the two other endpoints are both in $D$, which is imopssible since such a hyperedge would reject the legal codeword $x$ as well; or this hyperedge induces an edge in $G$ that crosses the cut between $D$ and $V\setminus D$. Thus, rejecting hyper-edges must not intersect $B_t$ at all.

Altogether we have $\eps d n $ rejecting hyperedges spanned on $V\setminus B_t$ such that each one intersects $D$. This means that there must be some vertex $v\in D$ that touches at least $\eps d n /(\delta n /3) = \alpha d$ rejecting hyperedges. Recall that $D\subset Z$ is disjoint from $A$, which  means that $\card{E_v}< \alpha d$. On the other hand, each rejecting hyperedge touching $v$ must add a distinct edge to $E_v$. Indeed recall that $E_v$ contains an edge $\set{u,u'}$ for each hyperedge $\set{u,u',v}$ such that $u,u'\in V^*$ and where $V^*$ is the set of vertices with multiplicity at most $\b\sd$. The claim follows since obviously all of the $\alpha d $ rejecting hyperedges are of this form (they do not contain a vertex of high multiplicity as these vertices are in $B$).\qed

\section{Very dense $q$-LTCs cannot be \ccc}
\def\d{\Delta}

In this section we prove the following theorem,\\

\noindent{\bf Theorem~\ref{thm:q}.~}{\em
Let $C\subseteq\bits ^n$ be a $q$-query LTC with distance $\delta$, and let $H$ be an $(\delta/2,\eps)$-test-hypergraph with density $\Delta$, where $\Delta =d n^{q-2}$, and LOF constraints. Then, the rate of $C$ is at most $O(1/d)$.}\\

Our proof is similar to the proof in the previous section. We describe an algorithm for assigning values to coordinates of a codeword, and show that a codeword is determined using $k \le n \cdot O(1/d)$ bits. As in the previous section, we use the following definitions. For a partition $(A,B)$ of the vertices $V$ of $H$, we define the $2$-graph $G_B = (A, E)$ where \[E = \sett{\set{a_1,a_2} \subset A}{\exists b_3, \cdots b_q \in B,\; \set{a_1,a_2,b_3, \cdots, b_q}\in E(H)}.\]


\begin{figure}[h]
\centering \fbox{%
\begin{minipage}[c]{6in}%
 \centering
\begin{enumerate}
\item[0.] {\bf Init:} Let $B=\emptyset$, $F=\emptyset$. Let $\alpha = \frac \eps{\delta/2}, \beta = \alpha / 6^q$.
\item\label{qstep:clean} {\bf Clean:} Repeat the following until $B$ remains fixed:
\begin{enumerate}
\item\label{qstep:cleana} Add to $B$ any vertex that occurs in a $q$-edge that has $q-1$ endpoints in $B$.
\item\label{qstep:cleanb} Add to $B$ all vertices in a connected component of $G_B$ whose size is at least $\beta d$, and add an arbitrary element in that connected component into $F$.
\end{enumerate}

\item\label{qstep:S}  {\bf $S$-step:} Each vertex outside $B$ tosses $q-2$ independent biased coins that get $1$ with probability $p=6^q/\alpha d$. A vertex goes into $S$ if it got $1$ in at least one of the $q-2$ coin tosses.
Let $B\leftarrow B\cup S$ and set $F\leftarrow F \cup S$.

\item If there are at least two distinct $x,y\in C$ such that $x_B=y_B$ goto step~\ref{qstep:clean}, otherwise halt.
\end{enumerate}
\end{minipage}} \caption{\label{fig:alg2}The Algorithm}
\end{figure}

%
%

The following two lemmas imply the theorem.
\begin{lemma}
If the algorithm halted, the code has at most $2^{\card F}$ words.
\end{lemma}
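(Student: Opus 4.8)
The plan is to mimic exactly the proof of Lemma~\ref{lemma:bound} from the $q=3$ case, since the algorithm in Figure~\ref{fig:alg2} is structured so that the set $F$ plays the same role: it always ``spans'' $B$ in the sense that fixing the codeword values on $F$ forces the values on all of $B$.

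First I would set up the invariant: at every point in the execution, for every codeword $w\in C$, the restriction $w_F$ determines $w_B$. I would prove this by induction on the steps of the algorithm. At the Init step $B=F=\emptyset$ and the claim is vacuous. For the Clean step, I consider the two sub-steps. In step~\ref{qstep:cleana}, a vertex $v$ is added to $B$ because it lies in a $q$-edge $h=\{v,b_2,\dots,b_q\}\in E(H)$ with $b_2,\dots,b_q\in B$ already; by the LOF property the constraint $f_h$ forces $w_v$ to be a function of $w_{b_2},\dots,w_{b_q}$, which by the induction hypothesis are determined by $w_F$, so $w_v$ is determined by $w_F$ as well. In step~\ref{qstep:cleanb}, an entire connected component $K$ of $G_B$ of size $\ge\beta d$ is added to $B$ and one representative $r\in K$ is added to $F$; I claim $w_r$ determines $w_u$ for every $u\in K$. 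Indeed, if $\{a_1,a_2\}$ is a $G_B$-edge then there is a hyper-edge $\{a_1,a_2,b_3,\dots,b_q\}\in E(H)$ with $b_3,\dots,b_q\in B$, so by LOF $w_{a_2}$ is determined by $w_{a_1}$ together with $w_{b_3},\dots,w_{b_q}$, and the latter are already determined by $w_F$ (induction hypothesis); propagating along a spanning tree of $K$ from $r$ shows every $w_u$, $u\in K$, is determined by $w_r$ and $w_F$. Since we put $r$ into $F$, the updated $F$ still spans the updated $B$. For the $S$-step, every vertex added to $B$ is also added to $F$, so the invariant is trivially preserved.

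Given the invariant, the lemma follows exactly as in Lemma~\ref{lemma:bound}: when the algorithm halts there are no two distinct $x,y\in C$ with $x_B=y_B$, i.e.\ the map $w\mapsto w_B$ is injective on $C$; composing with the fact that $w_F$ determines $w_B$, the map $w\mapsto w_F$ is injective on $C$, hence $\card C\le 2^{\card F}$.

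The only mild subtlety — and the step I would be most careful about — is the use of the LOF property in step~\ref{qstep:cleanb}: a single $G_B$-edge $\{a_1,a_2\}$ may have several hyper-edge preimages $\{a_1,a_2,b_3,\dots,b_q\}$, and for the propagation argument I just need one of them, which exists by definition of $G_B$. (In the $q=3$ linear case this is where the ``multiplicity'' bookkeeping entered; for a clean LOF statement with general $q$ one should note that ``$w_{a_1}$ plus the $B$-values determine $w_{a_2}$'' is all that is used, and the equivalence-class subtleties from Section~2, if relevant, only cost a constant factor in $\card F$ which is absorbed into the $O(1/d)$ bound.) Everything else is a direct transcription of the earlier argument.
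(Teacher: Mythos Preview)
Your proposal is correct and follows exactly the approach the paper intends: the paper's own proof of this lemma consists of the single line ``Identical to the case of $3$-queries,'' and what you have written is precisely the unpacking of that reference, with the invariant that $w_F$ determines $w_B$ maintained through each sub-step via the LOF property. The multiplicity caveat you raise is in fact moot here, since the $q>3$ algorithm in Figure~\ref{fig:alg2} omits the multiplicity bookkeeping altogether (no analogue of step~\ref{step:cleanc}); your core propagation argument along a spanning tree of the component is all that is needed.
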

\begin{proof}
Identical to the case of $3$-queries.
\end{proof}

\begin{lemma}\label{lemma:q-alg}
Let $B_t$ denote the set $B$ at the end of the $t$-th Clean step. Let $v$ be a vertex whose $H$ degree is at least $\alpha\d$. Then if $v\not\in B_{t-1}$ the probability over the choice of $S$ that $v\in B_t$ is at least $\frac 1 2$.
\end{lemma}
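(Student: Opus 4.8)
The plan is to mirror the proof of Lemma~\ref{lemma:alg} from the $3$-query case, adapting the concentration argument to $q$ queries. Fix a vertex $v$ with $H$-degree at least $\alpha\d = \alpha d n^{q-2}$, and assume $v\not\in B_{t-1}$. Each hyper-edge through $v$ has the form $\{v,u_2,\ldots,u_q\}$. I would like to argue that with probability at least $\frac12$, either some hyper-edge through $v$ has all of $u_2,\ldots,u_q$ land in $B_t$ (so $v$ enters $B_t$ via step~\ref{qstep:cleana}), or enough of the ``residual'' edges $\{v,u_2\}$ survive into $G_{B_t}$ so that $v$'s connected component in $G_{B_t}$ has size $\ge\beta d$ and $v$ enters via step~\ref{qstep:cleanb}.

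First I would set up the right random event. The $S$-step puts each vertex outside $B$ into $S$ independently with probability roughly $(q-2)p = 6^q/(\alpha d) \cdot (q-2)$-ish (more precisely, with probability $1-(1-p)^{q-2}$, which is between $(q-2)p/2$ and $(q-2)p$ for the relevant range). For a hyper-edge $h=\{v,u_2,\ldots,u_q\}$ through $v$, consider the event that $u_3,\ldots,u_q$ all enter $S$ while, say, $u_2$ may or may not; when this happens, if $u_2\in S$ then $v$ gets $q-1$ neighbors in $B_t$ and is swept in by step~\ref{qstep:cleana}, and if $u_2\not\in B_t$ then the pair $\{v,u_2\}$ becomes an edge of $G_{B_t}$. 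The probability that $u_3,\ldots,u_q$ all enter $S$ is roughly $((q-2)p)^{q-2}$, which is chosen (via $p=6^q/\alpha d$) so that the expected number of residual $G_{B_t}$-edges incident to $v$ is a constant multiple of $\alpha\d \cdot p^{q-2} \approx \alpha d n^{q-2}\cdot n^{-(q-2)} = \Theta(\alpha d)$, comfortably above $\beta d$. So in expectation $v$ acquires $\gg\beta d$ new neighbors.

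The main obstacle, as in the $q=3$ case, is concentration: the indicator variables for distinct hyper-edges through $v$ are not independent, because two hyper-edges can share several of the vertices $u_3,\ldots,u_q$. I would handle this exactly as before via a second-moment / Chebyshev argument, but the variance bookkeeping is heavier. I would define, for each candidate residual edge $\{v,u_2\}$, an indicator $I_{u_2}$ for ``$\exists$ a hyper-edge $\{v,u_2,u_3,\ldots,u_q\}$ with $u_3,\ldots,u_q\in S$ and $u_2\not\in B_t$'', sum these to get $I=\sum I_{u_2}$, compute $\E[I]=\Theta(\alpha d)\ge 2\beta d$, and bound $\mathrm{Var}[I]$ by grouping pairs $u_2,u_2'$ according to how many coordinates their witnessing hyper-edges share. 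The LOF property again forces the ``other'' endpoints in a shared hyper-edge to lie in few multiplicity classes, controlling the dependencies; and the LOF/linear structure ensures that counting residual edges is the same as counting constraints, so no over-counting degrades the degree bound. Choosing $\beta=\alpha/6^q$ small enough makes $\mathrm{Var}[I]/(\E[I]/2)^2 \le \frac12$, so $\Pr[I\ge\E[I]/2]\ge\frac12$, and $I\ge\beta d$ puts $v$ into $B_t$ at step~\ref{qstep:cleanb}. (Along the way one also has to check the degenerate case where some of the $u_i$ already lie in $B_{t-1}$ or have high multiplicity, which only helps.) I would also note that $S$-sets of size more than, say, $2(q-2)pn$ occur with negligible probability and can be discarded, so the bound on $\card F$ in the halting analysis goes through as in Section~2.
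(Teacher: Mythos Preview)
Your outline has a real gap, and the paper's proof takes a different route.

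First, the expectation step does not go through as written. You substitute $p^{q-2}\approx n^{-(q-2)}$, but $p=6^q/(\alpha d)$, not $1/n$. More importantly, the quantity $\alpha\Delta\cdot p^{q-2}$ you compute is (an upper bound on) the expected number of \emph{hyperedges} $\{v,u_2,\ldots,u_q\}$ with $u_3,\ldots,u_q\in S$; what you need is the expected number of \emph{distinct} residual neighbours $u_2$. Nothing in the hypothesis rules out the $\alpha\Delta$ hyperedges through $v$ concentrating on very few values of $u_2$, in which case $I=\sum_{u_2}I_{u_2}$ can be far below $\beta d$ even in expectation.

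Second, the variance argument you sketch does not transfer from the $3$-query case. The $q$-query algorithm has no multiplicity bookkeeping at all --- the Init step is empty and there is no analogue of step~1(c) --- so you cannot invoke the $V^*$ machinery. And the LOF property only pins down a vertex when the other $q-1$ are fixed; two hyperedges through $v$ that share only some of $u_3,\ldots,u_q$ give correlated indicators without any multiplicity constraint to exploit. Bounding $\mathrm{Var}[I]$ in one shot would require a genuinely new counting argument, not a repetition of Section~2.

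The paper avoids both issues by a cascade. It uses the $q-2$ independent coins to define sets $S_1,\ldots,S_{q-2}$ and peels off one coordinate at a time: from $N_{q-1}(v)$ (the $(q-1)$-tuples completing $v$ to a hyperedge) it forms $N_i(v)$ by keeping those $i$-tuples that extend to some $(i{+}1)$-tuple in $N_{i+1}(v)$ whose last coordinate lies in $S_{q-1-i}$. An averaging argument shows that whenever $|N_{i+1}(v)|$ is large, at least $|N_{i+1}(v)|/(2n)$ of the $i$-tuples are ``heavy'' (have $\Omega(\alpha d)$ extensions), and each heavy tuple survives this step with probability $p_i\ge 1-\tfrac{1}{8q+1}$. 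Because $p_i$ is close to $1$, the crude bound $\mathrm{Var}[I]\le \E[I]^2(1/p_i-1)$ already yields concentration --- no LOF or multiplicity structure is used. Chaining the $q-2$ steps gives $|N_1(v)|>\alpha d/6^q=\beta d$ with probability $>1/2$, which puts $v$ into $B_t$ via step~\ref{qstep:cleanb}.
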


Before proving the lemma, let us see how it implies the theorem.
\begin{proof}(of theorem)
 Let $L$ denote the vertices of degree less than $\alpha\Delta$. First, we prove that $\card L < \delta n/2$. Otherwise, $\card L \ge \delta n /2$ and let $L'\subset L$ be an arbitrary subset of $L$ of size $\delta n/2$.  Let $x\in C$ and consider the hybrid word $w$ defined to equal $x$ outside of $L'$ and $1- x$ on $L'$. Clearly \[ \dist(w,C) = \dist(w,x) = \delta /2\] since were there a closer word $x'\neq x$ to $w$ it would be less than $\delta$ away from $x$ by the triangle inequality. By the $(\delta/2,\eps)$-LTC property we know that $w$ is rejected with probability at least $\eps$, i.e., it is rejected by at least $\eps n \Delta$ hyperedges. But simple averaging shows there must be a vertex in $L'$ touching at least $\eps n \Delta / (\delta n/2) = \alpha \Delta $ hyperedges, contradicting the definition of $L'\subseteq L$.

 Denote by $B_t$ the set $B$ after the $t$-th Clean step. Also denote $A_t = V \setminus(B_t\cup L)$.
 Let $v\in A_t$, then by Lemma~\ref{lemma:q-alg} $v$ will enter $B_{t+1}$ with probability at least $1/2$. We expect, over the choice of $S$ that half of the vertices of $A_t$ will go into $B_{t+1}$, and thus \[ \E_S[\card{A_{t+1}}] \le \card{A_t}/2 \] Let $S^{(1)},\ldots,S^{(t)}$ be the sets that attain or exceed this expectation at steps $1,\ldots,t$ (again, wlog we ignore sets $S$ whose size deviates from their expected size which is at most $qn/d$). If the algorithm chooses these sets $S^{(1)}, S^{(2)}, \ldots$ then at the $t$-th step we have $\card{A_{t+1}} \le 1/2^t \cdot n$. For $t = \log 2/\delta$ this is no larger than $\delta n /2$. Since $L$ too is smaller than $\delta n /2$, we deduce that $\card{B_{t+1}} > (1-\delta)n$ and the algorithm must halt.

The size of the set $F$ when the algorithm halts is no more than $\log 2/\delta $ times twice the expected size of $S$ (which is at most $O(n/ d)$), plus no more than $n/(\beta d)$ (from the Clean steps). Altogether this is $O(n/d)$ and this bounds the rate by $O(1/d)$.
\end{proof}

Let us now prove Lemma~\ref{lemma:q-alg}.
\begin{proof}

Consider the set of ($q-1$)-edges \[ N_{q-1}(v) = \sett{\set{u_1, \cdots, u_{q-1}}}{\set{u_1, \cdots, u_{q-1},v} \in E(H)}.\]
While we know that $\card{N_{q-1}(v)} \ge \alpha\Delta = \alpha dn^{q-2}$, we do not know how many vertices participate in these edges.
Let us fix some arbitrary order converting each subset in $N_{q-1}(v)$ to an ordered tuple.

Each vertex $v$ outside $B$ tosses $q-2$ independent coins each has probability $p=\frac{6^q}{d}$ of getting $1$.
Let $S_i$, $1 \leq i \leq q-2$,  be the set of vertices that their $i$-th coin toss is $1$.
A vertex $v$ goes into $S$ if it gets $1$ in at least one of the $q-2$ independent coin tosses, i.e. $S$ is the union of all $S_i$'s.


For $1 \leq i \leq q-2$ we define $N_{i}(v)$ similar to the above. Namely
\[ N_{i}(v) = \sett{(u_1, \cdots, u_{i})}{(u_1, \cdots, u_{i},x) \in N_{i+1}(v) \mbox{ and } x \in S_{q-1-i}}.\]
We call the elements in $N_i(v)$ $i$-edges (even for $i=1$).

Our goal is to show that with probability greater than $\frac{1}{2}$ over the selection of $S$, the set $N_1(v)$ is of size greater than $\alpha \frac{d}{6^q}$. This would suffice to prove the lemma since this means that $v$ is in a large connected component and will go into $B$ in the next iteration.

An $i$-edge ${\set{u_1, \cdots, u_{i}}}$ is called $h$-{\em heavy} in $N_{i+1}(v)$ if the number of distinct
$x$'s for which $\set{u_1, \cdots, u_{i} ,x } \in N_{i+1}(v)$ is at least $h$.
For $1 \leq i \leq q-2$, let $H_{i}(v)$ be the set of $i$-edges that are $\frac{\alpha d}{2 \cdot 5^{q-2-i}}$ heavy in $N_{i+1}(v)$.

\begin{claim}\label{claim:many-heavy}
$|H_{i}(v)| \ge \frac{|N_{i+1}(v)|}{2 n}$ assuming $|N_{i+1}(v)| \geq \frac{ \alpha  dn^{i}}{5^{q-2-i}}$.
\end{claim}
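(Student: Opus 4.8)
The plan is to prove the contrapositive-style counting bound: if too few $i$-edges are heavy, then the light $i$-edges cannot account for all of $N_{i+1}(v)$. Concretely, partition the $(i+1)$-edges of $N_{i+1}(v)$ according to their first $i$ coordinates $(u_1,\dots,u_i)$; each such prefix is an $i$-edge, and the number of $(i+1)$-edges sitting above a given prefix is exactly the ``multiplicity'' measured in the definition of heaviness. An $i$-edge is heavy iff its multiplicity is at least $h := \frac{\alpha d}{2\cdot 5^{q-2-i}}$. Thus
\[
|N_{i+1}(v)| \;=\; \sum_{\text{$i$-edges }e} (\text{mult of }e) \;=\; \sum_{e\text{ heavy}} (\text{mult of }e) \;+\; \sum_{e\text{ light}} (\text{mult of }e).
\]

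First I would bound the light contribution: each light $i$-edge contributes strictly less than $h$, and there are at most $n^i$ possible $i$-edges in total (the coordinates range over $[n]$), so the light part is at most $n^i \cdot h = n^i\cdot \frac{\alpha d}{2\cdot 5^{q-2-i}}$. Using the hypothesis $|N_{i+1}(v)| \ge \frac{\alpha d n^i}{5^{q-2-i}}$, the light part is at most $\tfrac12 |N_{i+1}(v)|$, hence the heavy $i$-edges must account for at least $\tfrac12|N_{i+1}(v)|$ of the mass. Second, I would bound the heavy contribution from above in a crude way: each heavy $i$-edge has multiplicity at most $n$ (there are at most $n$ choices for the extra coordinate $x$), so
\[
\tfrac12 |N_{i+1}(v)| \;\le\; \sum_{e\text{ heavy}} (\text{mult of }e) \;\le\; |H_i(v)|\cdot n,
\]
which rearranges to $|H_i(v)| \ge \frac{|N_{i+1}(v)|}{2n}$, as claimed.

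The only subtlety — and the step I'd be most careful about — is making sure the count ``at most $n^i$ total $i$-edges'' is the right ambient bound and that the heaviness threshold $h$ together with the assumed lower bound on $|N_{i+1}(v)|$ really gives the factor $\tfrac12$; this is a direct substitution but it is where the specific constants $5^{q-2-i}$ and the $2$ in the threshold are calibrated, so I would check the arithmetic here rather than anywhere else. Everything else is the elementary ``sum splits into heavy plus light, bound each trivially'' argument, with no probabilistic content (the randomness of $S$ enters only in how $N_{i+1}(v)$ was built, which is already fixed at this point in the argument).
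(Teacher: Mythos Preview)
Your proposal is correct and is essentially the same argument as the paper's: both split $N_{i+1}(v)$ according to whether the $i$-prefix is heavy or light, bound the light contribution by $n^i\cdot h \le \tfrac12|N_{i+1}(v)|$ using the hypothesis, and bound each heavy prefix's multiplicity by $n$. The only cosmetic difference is that the paper phrases it as a proof by contradiction (assume $|H_i(v)| < |N_{i+1}(v)|/2n$ and derive $|N_{i+1}(v)| < |N_{i+1}(v)|$), whereas you carry out the same two bounds directly.
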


\begin{proof}
Indeed, otherwise the number of $i+1$-edges in $N_{i+1}(v)$ is too low, namely, at most
$$ \mbox{number of heavy $i$-edges} \cdot n + \mbox{number of non-heavy $i$-edges} \cdot \frac{\alpha d}{2 \cdot 5^{q-2-i}}.$$
This is smaller than
$$\frac{|N_{i+1}(v)|}{2 n}\cdot n  + n^{i} \cdot \frac{\alpha d}{2 \cdot 5^{q-2-i}} = \frac{|N_{i+1}(v)|}{2} + \frac{\alpha dn^{i}}{2 \cdot 5^{q-2-i}} \leq |N_{i+1}(v)|.$$
\end{proof}

We next show that edges in $H_i(v)$ have very high probability of being selected into $N_{i}(v)$.


\begin{claim}\label{claim:high-prob-for-heavy-edges}
For $1 \leq i \leq q-2$, an edge in $H_{i}(v)$ goes into $N_{i}(v)$ with probability greater than
$p_i \eqdef  1-(1-p)^{\alpha d /2 \cdot 5^{q-2-i}} \geq 1 - \frac{1}{8q+1}$ over the selection of $S_1, \cdots, S_i$
\end{claim}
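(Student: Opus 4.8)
The plan is to fix an edge $e = \set{u_1,\dots,u_i} \in H_i(v)$ and show it enters $N_i(v)$ with high probability by isolating the one ``fresh'' layer of coin tosses that is relevant, namely $S_{q-1-i}$, and applying a one-sided tail estimate. Unwinding the recursive definition of the $N_j(v)$'s: $N_{q-1}(v)$ is fixed by $H$, and passing from $N_{j+1}(v)$ to $N_j(v)$ uses exactly the coins defining $S_{q-1-j}$. Hence $N_{i+1}(v)$ — and therefore the event $e\in H_i(v)$ together with the extension set $X_e \eqdef \sett{x}{(u_1,\dots,u_i,x)\in N_{i+1}(v)}$ — is a deterministic function of $S_1,\dots,S_{q-2-i}$ alone, while $N_i(v)$ depends in addition on $S_{q-1-i}$. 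By definition of $H_i(v)$, on the event $e\in H_i(v)$ we have $\card{X_e}\ge h \eqdef \frac{\alpha d}{2\cdot 5^{q-2-i}}$.

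Next I would condition on any outcome of $S_1,\dots,S_{q-2-i}$ for which $e\in H_i(v)$; this fixes the set $X_e$ with $\card{X_e}\ge h$. The set $S_{q-1-i}$ is assembled from the $(q-1-i)$-th coin of each vertex outside $B$, and these coins are independent of $S_1,\dots,S_{q-2-i}$, so each $x\in X_e$ lands in $S_{q-1-i}$ independently with probability $p$. The edge $e$ fails to enter $N_i(v)$ only when $X_e\cap S_{q-1-i}=\emptyset$, an event of conditional probability at most $(1-p)^{\card{X_e}}\le (1-p)^h$. Since this estimate is uniform over the conditioning, $\Pr[e\in N_i(v)\mid e\in H_i(v)] \ge 1-(1-p)^h = p_i$.

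It then remains to verify the numerical bound $p_i\ge 1-\frac{1}{8q+1}$. Using $(1-p)^h\le e^{-ph}$ and the value $p = 6^q/(\alpha d)$ from the algorithm, one gets $ph = \frac{6^q}{2\cdot 5^{q-2-i}}$, and since $1\le i\le q-2$ this is at least $\frac{6^q}{2\cdot 5^{q-3}} = \frac{125}{2}\left(\frac{6}{5}\right)^q$, which exceeds $\ln(8q+1)$ for every $q\ge 3$ (the left side is exponential in $q$, the right side logarithmic; even at $q=3$ one has $\frac{125}{2}(1.2)^3 > 100 > \ln 25$). Hence $(1-p)^h\le e^{-ph}\le \frac{1}{8q+1}$, and $p_i\ge 1-\frac{1}{8q+1}$ as claimed.

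The one step that requires genuine care is the first one: one must check that ``$h$-heaviness in $N_{i+1}(v)$'' and the extension set $X_e$ are measurable with respect to the earlier coin layers $S_1,\dots,S_{q-2-i}$, so that $S_{q-1-i}$ is truly independent of $X_e$ and the conditional tail bound $(1-p)^{\card{X_e}}$ is legitimate. Once this independence structure is pinned down, the rest is the routine computation above.
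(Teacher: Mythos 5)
Your proof is correct and follows exactly the paper's route: the paper also observes that an edge of $H_i(v)$ has at least $h=\frac{\alpha d}{2\cdot 5^{q-2-i}}$ valid extensions $x$, each of which lands in $S_{q-1-i}$ with probability $p$, yielding $1-(1-p)^h = p_i$. What you add, and the paper leaves implicit, is the measurability check that $H_i(v)$ and the extension set $X_e$ depend only on the earlier coin layers $S_1,\dots,S_{q-2-i}$ so that the coins of $S_{q-1-i}$ are genuinely fresh, together with an explicit verification of the asserted numerical bound $p_i\ge 1-\frac{1}{8q+1}$ (via $ph = \frac{6^q}{2\cdot 5^{q-2-i}}\ge \frac{125}{2}(6/5)^q > \ln(8q+1)$), which the paper states without proof; you also quietly correct the paper's index typo (``over $S_1,\dots,S_i$'' should read $S_1,\dots,S_{q-1-i}$).
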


\begin{proof}
Consider ${\set{u_1, \cdots, u_{i}}} \in H_{i}(v)$. By the definition of $H_{i}(v)$
there are at least $\frac{\alpha d}{2 \cdot 5^{q-2-i}}$ distinct $x$'s such that
${\set{u_1, \cdots, u_{i},x}} \in N_{i+1}(v)$, ${\set{u_1, \cdots, u_{i}}}$ goes into $N_{i}(v)$ if at least one of these distinct $x$'s is selected into $S_{q-1-i}$. The probability that at least one is selected into $S_{q-1-i}$ is $p_i \eqdef  1-(1-p)^{\alpha d /2 \cdot 5^{q-2-i}}$.
Note that since $p = 6^q/\alpha d$, $p_i \geq 1 - \frac{1}{8q+1}$.
\end{proof}

%


We are now ready to show that for $1 \leq i \leq q-2$, with probability greater than $(1-\frac{1}{2q})^{i} > \frac{1}{2}$ over the selection of $S_1, \cdots, S_i$, $|N_{i}(v)|> \alpha dn^{i-1}/5^{q-1-i}$. Note that this implies that
$$|N_{1}(v)| \ge \alpha d/5^{q-2} > \alpha d/6^q.$$ This implies that $v$ is in a large connected component and hence will enter into $B$ in the next iteration.


\begin{claim}\label{claim:variance-bound}
For $1 \leq i \leq q-2$, let $N_i = |N_i(v)|$.
$$\Pr_{S_{q-1-i}} \left(N_i > \frac{1}{5n} N_{i+1} \left|  N_{i+1} > \frac{1}{(5n)^{q-2-i}}N_{q-1}\right.\right) > 1-\frac{1}{2q}$$
\end{claim}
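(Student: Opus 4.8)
The plan is to estimate, for a fixed outcome of $S_1,\dots,S_{q-2-i}$ (equivalently, conditioned on the event $N_{i+1} > (5n)^{-(q-2-i)} N_{q-1}$, which makes $N_{i+1}(v)$ a fixed set of $(i+1)$-edges of size $N_{i+1}$), the size of $N_i(v)$, which is a random variable depending only on the fresh coin tosses $S_{q-1-i}$. By Claim~\ref{claim:many-heavy}, since $N_{i+1} \ge \alpha d n^{i}/5^{q-2-i}$, the set $H_i(v)$ of heavy $i$-edges has size at least $N_{i+1}/(2n)$. By Claim~\ref{claim:high-prob-for-heavy-edges}, each heavy $i$-edge independently lands in $N_i(v)$ — no: \emph{not} independently — with probability $\ge p_i \ge 1 - \frac{1}{8q+1}$. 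So the expected number of heavy $i$-edges that survive into $N_i(v)$ is at least $p_i \cdot N_{i+1}/(2n) \ge (1 - \frac{1}{8q+1})\cdot N_{i+1}/(2n)$, comfortably above the target threshold $N_{i+1}/(5n)$. Thus in expectation $|N_i(v)|$ is large enough; the work is to turn this into a high-probability statement with failure probability below $1/(2q)$.

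The first step is to write $N_i(v) \ge \sum_{e \in H_i(v)} J_e$ where $J_e$ is the indicator that the heavy $i$-edge $e$ is selected into $N_i(v)$ (i.e. that at least one of its $\ge \frac{\alpha d}{2\cdot 5^{q-2-i}}$ completions $x$ landed in $S_{q-1-i}$). These indicators are \emph{not} independent, because two heavy $i$-edges may share completion vertices, but — exactly as in the proof of Lemma~\ref{lemma:alg} — a shared completion vertex forces the two edges' completion sets to lie in a bounded number of multiplicity classes, hence any completion vertex is shared by at most $O(1)$ (more precisely, a number bounded in terms of the multiplicity cap, which here is governed by $\beta d$ and the LOF constant) heavy $i$-edges. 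This bounded overlap lets me bound $\mathrm{Var}[\sum_e J_e]$ by a quantity of order (bounded constant)$\cdot \E[\sum_e J_e]$, in direct analogy with the variance computation in Section~2. Then Chebyshev's inequality gives that $\sum_e J_e$ deviates below half its expectation with probability at most $\mathrm{Var}/( \E/2)^2 = O(1/\E) = O(n/N_{i+1})$, which is $o(1)$ since $N_{i+1}$ is polynomially large in $n$; in particular it can be made smaller than $1/(2q)$ for $n$ large. Since half the expectation already exceeds $N_{i+1}/(5n)$, this yields the claim.

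The main obstacle I expect is the dependency/variance bound: controlling $\mathrm{Var}[\sum_{e\in H_i(v)} J_e]$ requires carefully arguing that the "hyperedge structure" of the completions does not let a single vertex of $S_{q-1-i}$ influence too many of the $J_e$ simultaneously. The clean way to see this is again via the LOF property: if $e = \{u_1,\dots,u_i\}$ and $e' = \{u_1',\dots,u_i'\}$ are distinct heavy $i$-edges sharing a completion vertex $x$ (so $\{u_1,\dots,u_i,x\}, \{u_1',\dots,u_i',x\} \in N_{i+1}(v)$), then $x$'s codeword value is determined by $v$ together with $e$, and also by $v$ together with $e'$; as in the footnote to Lemma~\ref{lemma:alg} this pins $x$ into one of $O(1)$ multiplicity classes, each of bounded size, forcing the number of heavy $i$-edges through any fixed $x$ to be bounded. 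Feeding this into the same chain of inequalities as the $Var[I]$ bound in Section~2 (sum over multiplicity classes, use that each class has size $\le \beta d$ or the appropriate analogue) gives $\mathrm{Var}[\sum_e J_e] = O(\E[\sum_e J_e])$ up to a constant depending only on $q$, which is all that is needed. A secondary, more bookkeeping-level obstacle is simply tracking the geometric constants $5^{q-2-i}$ versus $6^q$ so that the surviving fraction $p_i$ stays bounded away from $1$ uniformly in $i$ and the threshold $\frac{1}{5n}N_{i+1}$ is genuinely below half the expectation — but this is exactly what the choice $p = 6^q/\alpha d$ and $\beta = \alpha/6^q$ is engineered to guarantee, so it should go through routinely.
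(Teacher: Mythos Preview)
Your plan has a genuine gap, and it is precisely at the point you flag as the ``main obstacle.'' You propose to bound $\mathrm{Var}\bigl[\sum_{e\in H_i(v)} J_e\bigr]$ by arguing, via LOF and multiplicity classes as in Section~2, that any completion vertex $x$ can be shared by only $O(1)$ heavy $i$-edges. But the inference you sketch runs in the wrong direction: from ``$(e,x)\in N_{i+1}(v)$'' you can conclude (via LOF on the underlying $q$-hyperedge) that the value at $x$ is determined by $v$, the coordinates of $e$, \emph{and} the already-fixed coordinates $s_1,\dots,s_{q-2-i}$ lying in $S_1,\dots,S_{q-2-i}$. This pins down the multiplicity class of $x$ given $e$ (and the $s_j$'s), but it says nothing about how many distinct $i$-tuples $e$ can sit over a fixed $x$. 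For $q=3$ the symmetry between $u$ and $u'$ made the argument go both ways; for $i\ge 2$ there is no analogous statement determining an entire $i$-tuple from $(v,x)$, and indeed in Section~3 the algorithm has no multiplicity cap (compare Figures~1 and~2: step~1(c) is absent). So the machinery you want to import is neither set up nor available here, and without it your variance bound $\mathrm{Var}=O(\E)$ is unsupported.

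The paper's actual argument avoids all dependency analysis. The key observation you are missing is that $p_i$ has been engineered (via the choice $p=6^q/\alpha d$) to be extremely close to $1$, namely $p_i\ge 1-\tfrac{1}{8q+1}$. For a sum $I=\sum_e I_e$ of $\{0,1\}$-indicators with $\E[I_e]\ge p_i$, one has the completely trivial bound
\[
\mathrm{Var}[I]\;\le\;|H_i(v)|\cdot \E[I]-\E[I]^2\;\le\;\E[I]^2\Bigl(\tfrac{1}{p_i}-1\Bigr)\;\le\;\frac{\E[I]^2}{8q},
\]
valid regardless of any dependence among the $I_e$. Chebyshev with $a=\E[I]/2$ then gives failure probability at most $1/(2q)$ directly. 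In short: when each indicator succeeds with probability near $1$, the sum is automatically concentrated near its maximum, and no structural control on covariances is needed. Your plan overlooks this and instead reaches for a dependency argument that does not go through for $q>3$.
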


\begin{proof}
For every $e \in H_i(v)$ we define an indicator random variable $I_e$ that gets $1$ iff $e$ is selected into $N_i(v)$, otherwise $I_e$ is $0$. Let $I= \sum_{e \in H_i(v)}I_e$. By Claim~\ref{claim:many-heavy} we have that if $N_{i+1} > \alpha d n^i / 5^{q-2-i} = \frac{1}{(5n)^{q-2-i}}N_{q-1}$ then $|H_i(v)| \geq \frac{N_{i+1}}{2n}$. Thus,
$$\E[N_i] \geq \E[I] = p_i|H_i(v)| .$$


The variance of $I$ can be bounded as follows. \inote{compute ad hasof}


\begin{eqnarray*}
Var[I]   & = & \sum_{e_1,e_2 \in H_{i}(v)}(E[I_{e_1} I_{e_2}] - E[I_{e_1}]E[I_{e_2}]) \leq  |H_{i}(v)|^2 p_i -  |H_{i}(v)|^2 p_i^2\\
         & = & |H_{i}(v)|^2(p_i-p_i^2) =\E^2[I] (\frac{1}{p_i} - 1) \leq \E^2[I] \frac{1}{8q }
\end{eqnarray*}

The last inequality holds since $p_i > 1 - \frac{1}{8q+1}$, which implies $\frac{1}{p_i} - 1 \leq \frac{1}{8q}$.

By Chebychev's inequality, \[ \Pr[ \card{I-\E [I]} \ge a] \le Var[I]/a^2 \]

Plugging in $a = \E[I]/2$ we get
\[ \Pr\left[ \card{I-\E [I]} \ge \frac{\E[I]}2\right] \le \frac{1}{2q}\]

Thus, the probability that $I \ge \E [I/2] \ge p_i |H_{i}(v)|/2 \geq (1 - \frac{1}{8q+1}) \cdot \frac{N_{i+1}}{2 \cdot 2 n} \ge \frac{N_{i+1}}{5n}$ is at least $(1-1/2q)$.
Thus, $$\Pr_{S_{q-1-i}} \left(N_i > \frac{1}{5n} N_{i+1} \left|  N_{i+1} > \frac{1}{(5n)^{q-2-i}}N_{q-1}\right.\right) > 1-\frac{1}{2q}.$$
\end{proof}

As a corollary of the last claim (Claim~\ref{claim:variance-bound}) we get the desired bound on $N_1(v)$:
\begin{corollary}
$$\Pr_{S_1,\cdots, S_{q-2}}(|N_{1}(v)|> \alpha d/5^q) > \frac{1}{2}$$
\end{corollary}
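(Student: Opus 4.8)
The plan is to iterate Claim~\ref{claim:variance-bound} across the $q-2$ coin-toss stages, chaining the conditional probabilities by a union bound. Concretely, I would prove by downward induction on $i$ (from $i=q-2$ down to $i=1$) that with probability at least $(1-\tfrac{1}{2q})^{q-1-i}$ over the choice of $S_{2},\dots,S_{q-1-i}$ we have $N_i(v) > \alpha d\, n^{i-1}/5^{q-1-i}$. The base case $i=q-2$ follows by applying Claim~\ref{claim:variance-bound} once, since $N_{q-1}(v) = N_{q-1}$ is deterministic and satisfies the required lower bound $|N_{q-1}(v)| \ge \alpha d n^{q-2}$ trivially (because $|N_{q-1}(v)| \ge \alpha\Delta$ by hypothesis on $\deg(v)$). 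For the inductive step, I would condition on the event $\mathcal{E}_{i+1}$ that $N_{i+1}(v) > \tfrac{1}{(5n)^{q-2-i}}N_{q-1}$, which holds with probability $\ge (1-\tfrac1{2q})^{q-2-i}$ by the inductive hypothesis; on that event, Claim~\ref{claim:variance-bound} gives that $N_i(v) > \tfrac{1}{5n}N_{i+1}(v) > \tfrac{1}{(5n)^{q-1-i}}N_{q-1}$ with (conditional) probability $> 1-\tfrac1{2q}$, and multiplying the two probabilities yields $(1-\tfrac1{2q})^{q-1-i}$.

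Running the induction down to $i=1$ gives, with probability $\ge (1-\tfrac1{2q})^{q-2} > \tfrac12$ over the full selection of $S_1,\dots,S_{q-2}$, that $N_1(v) > \alpha d\, n^{0}/5^{q-2} = \alpha d/5^{q-2} \ge \alpha d/5^q$. Since $N_1(v)$ is a set of $1$-edges, i.e.\ a set of at least $\alpha d/5^q$ distinct vertices $u$ each forming a hyper-edge $\{u,v\}$-extension fully contained (apart from $v$) in $S = S_1\cup\dots\cup S_{q-2} \subseteq B$, every such $u$ either lies in $S$ (so $v$ sits in a $q$-edge with $q-1$ endpoints in $B$ and enters $B$ via step~\ref{qstep:cleana}), or $u\notin S$, in which case the pair $\{u,v\}$ is an edge of $G_B$ in the next Clean step. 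In the latter situation $v$ has degree $\ge \alpha d/5^q \ge \beta d$ in $G_{B}$ (recall $\beta = \alpha/6^q$), so its connected component has size $\ge \beta d$ and $v$ enters $B$ via step~\ref{qstep:cleanb}. Either way $v\in B_t$, which is exactly the conclusion $\Pr_S[v\in B_t]\ge \tfrac12$ of Lemma~\ref{lemma:q-alg}.

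The main obstacle — and the only genuinely delicate point — is making the conditioning rigorous: Claim~\ref{claim:variance-bound} is a statement about $\Pr_{S_{q-1-i}}(\cdot \mid N_{i+1} > \cdots)$, but $N_{i+1}(v)$ depends only on $S_{q-2},\dots,S_{q-i}$, which are independent of $S_{q-1-i}$, so the conditioning is on a set of coins disjoint from the fresh randomness used in the $i$-th stage. I would therefore be careful to state the induction over the filtration in the correct order — stage $i$ reveals $S_{q-1-i}$, so $N_{q-1}$ is revealed first (it is fixed), then $N_{q-2}$, and so on — and to note that the event in the inductive hypothesis is measurable with respect to the coins already revealed, so that Claim~\ref{claim:variance-bound}'s conditional bound applies verbatim. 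A minor bookkeeping check is that $(1-\tfrac1{2q})^{q-2}>\tfrac12$, which holds since $(1-\tfrac1{2q})^{q}\ge (1-\tfrac1{2q})^{2q}\cdot$(something $\ge 1$)\ldots more simply, $(1-\tfrac1{2q})^{q-2} \ge (1-\tfrac1{2q})^{2q} \ge 4^{-1}\cdot$, and one verifies directly that for all $q\ge 3$ the bound exceeds $\tfrac12$; alternatively use $\ln(1-\tfrac1{2q}) \ge -\tfrac{1}{2q}-\tfrac{1}{4q^2} \ge -\tfrac{3}{4q}$ so the product is $\ge e^{-3(q-2)/4q} > e^{-3/4} > \tfrac12$. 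With these points handled the corollary follows.
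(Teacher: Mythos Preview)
Your proposal is correct and follows essentially the same approach as the paper: a downward induction on $i$ using Claim~\ref{claim:variance-bound} to show $\Pr_{S_1,\dots,S_{q-1-i}}\bigl(N_i > (5n)^{-(q-1-i)}N_{q-1}\bigr) > (1-\tfrac{1}{2q})^{q-1-i}$, with the paper taking the trivial base case $i=q-1$ rather than $i=q-2$. Two small remarks: the randomness in your inductive statement should read $S_1,\dots,S_{q-1-i}$ (not $S_2,\dots$), and your final paragraph about $v$ entering $B_t$ is really the conclusion of Lemma~\ref{lemma:q-alg} rather than of this corollary.
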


\begin{proof}
We prove by downwards induction on $i$ that
$$\Pr_{S_1, \cdots, S_{q-1-i}} (N_i > \frac{1}{(5n)^{q-1-i}}N_{q-1}) > (1-\frac{1}{2q})^{q-1-i}.$$

For $i=q-1$ this holds with probability $1$. By Claim~\ref{claim:variance-bound}, if the above holds for $i+1$ then it holds for $i$.
\end{proof}
The last corollary establishes the proof the the lemma.

%
%
%
%


\end{proof}

\section{Exploring possible improvements}\label{sec:counterex}

\subsection{Tradeoff between rate and density}\label{sec:tradeoff}

Any improvement over our bound of $\rho < 1/d^{1/2}$, say to a bound of the form $\rho < 1/d^{0.501}$ would already be strong enough to rule out \ccc-LTCs (with a non-weighted tester) regardless of their density.
The reason for this is the following reduction by Oded Goldreich.

Suppose, for the sake of contradiction, that there is some family
\begin{lemma}\label{lemma:expo}
Suppose for some $q\ge 3$ and some $\epsilon>0$ the following were true.\\

\begin{minipage}[h]{5in}
For any family $\set{C_n}$ of $q$-query LTCs with rate $\le \rho$ such that each $C_n$ has a tester with density at least $d$, then $\rho \le 1/d^{\frac 1{q-1}+\epsilon}$.
\end{minipage}\\

\noindent Then, there is no family of $q$-query LTCs with constant rate and any density, such that the tester is non-weighted.
\end{lemma}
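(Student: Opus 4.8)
The plan is to argue by contradiction: assume there is a family $\{C_n\}$ of $q$-query LTCs with constant rate $\rho_0 = \Omega(1)$ and constant distance $\delta_0 = \Omega(1)$, tested by a non-weighted tester, but possibly of very low (constant) density $d_0$. From this family I would manufacture, for every target density $d$, a new family of $q$-query LTCs that still has constant rate and distance but whose tester has density at least $d$ — and then feed it into the hypothesized bound $\rho \le 1/d^{\frac1{q-1}+\epsilon}$ to derive a contradiction, since a constant cannot be bounded by $1/d^{\frac1{q-1}+\epsilon}$ once $d$ is large enough.

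The main construction is a suitable tensoring/repetition trick, which is exactly the kind of operation that raises density. The key observation is that if one takes a direct product (or a tensor product) of $m$ independent copies of a code on $n$ coordinates, one gets a code on roughly $mn$ coordinates; the natural tester — pick one of the $m$ blocks uniformly and run the original tester inside it — has density that scales like the original density, which does not help. So instead I would use the power of the \emph{non-weighted} hypothesis: take the original hypergraph $H$ on $[n]$ with $d_0 n$ hyperedges, and \emph{duplicate each coordinate} $t$ times, producing a code $C'$ on $tn$ coordinates (a codeword of $C'$ is a codeword of $C$ with each symbol repeated $t$ times). Add to the tester (i) for every original hyperedge, all $t^q$ ``lifted'' copies obtained by choosing one of the $t$ duplicates of each participating coordinate, and (ii) equality constraints among the $t$ duplicates of each coordinate — these are $q$-query (indeed $2$-query, pad with dummies) and LOF. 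The new hypergraph has $\Theta(t^q \cdot d_0 n + t^2 n)$ hyperedges on $tn$ vertices, hence density $\Theta(t^{q-1} d_0)$, which can be pushed above any desired $d$ by taking $t$ large. Meanwhile the rate of $C'$ is $\rho_0/t$ and the distance is still $\Theta(\delta_0)$, and $C'$ is still a $q$-query LTC (a word far from $C'$ is either locally inconsistent — caught by an equality constraint — or, after ``rounding'' each block to its majority, far from $C$, caught by a lifted copy of the original test); one must check the soundness parameter $\epsilon'$ stays bounded below by a constant independent of $t$, which it does because being $\tau$-far from $C'$ forces a constant fraction of blocks to be inconsistent or forces the rounded word to be $\Omega(\tau)$-far from $C$.

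Now apply the hypothesis to the family $\{C'\}$ with density $d = c\, t^{q-1} d_0$: it gives $\rho_0/t = \rho(C') \le 1/d^{\frac1{q-1}+\epsilon} = 1/(c\,t^{q-1}d_0)^{\frac1{q-1}+\epsilon} \le c' / t^{1 + (q-1)\epsilon}$. Thus $\rho_0 \le c'/t^{(q-1)\epsilon} \to 0$ as $t\to\infty$, contradicting $\rho_0 = \Omega(1)$. The step I expect to be the main obstacle — and the one deserving the most care — is verifying that the soundness $\epsilon'$ of the tester for $C'$ is bounded below by an absolute constant uniformly in $t$; the density bookkeeping and the rate/distance claims are routine, but the soundness argument has to correctly combine the ``consistency'' tests and the ``lifted'' original tests and argue that a word $\tau$-far from $C'$ fails a constant fraction of \emph{all} hyperedges (not just of one of the two types), which requires choosing the mixing ratio so that neither type of test is drowned out by the other in the uniform (non-weighted) distribution over hyperedges.
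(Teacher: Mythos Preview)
Your proposal is correct and follows essentially the same approach as the paper: duplicate each coordinate $t$ times so that the rate drops by a factor of $t$ while the density rises by a factor of $t^{q-1}$, then plug into the hypothesized bound to force $\rho_0 \le O(t^{-(q-1)\epsilon}) \to 0$. The paper's proof is terser---it simply asserts ``the testability can also be shown'' without adding your equality constraints or discussing the mixing-ratio issue you flag---so your extra care about uniform soundness in $t$ is a genuine (and appropriate) addition rather than a divergence in strategy.
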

\begin{proof}
Let $\beta = \frac 1 {q-1}+\epsilon$, and let $t\in \mathbb{N}$. Let $\set{C_i}$ be an infinite family of $q$-LTCs with density $d = O(1)$ and relative rate $\rho = \Omega(1)$. Then there is another infinite family $\set{\tilde C_i}$ of $q$-LTCs with density $d\cdot t^{q-1}$ and relative rate $\rho/t$. $\tilde C_i$ is constructed from $C_i$ by duplicating each coordinate $t$ times and replacing each test hyper-edge by $q^t$ hyperedges. Clearly the density and the rate are as claimed. The testability can also be shown. Plugging in the values $\tilde \rho = \rho/t$ and $\tilde d = d t^{q-1}$ into the assumption we get
\[ \rho/t = \tilde \rho \le 1/\tilde d^\beta = 1/(d t^{q-1})^\beta
\]
In other words $\rho d^\beta \le t ^{1-(q-1)\beta}$. Since $t$ is unbounded this can hold only if the exponent of $t$ is positive, i.e., $\beta \le 1/(q-1)$, a contradiction.
\end{proof}

\subsection{For $q>3$ density must be high}

\begin{lemma}\label{lemma:q}
Let $C$ be a $q$-LTC with rate $\rho$, and density $d$. Then there is a $(q+q')$-LTC $C'$ with density $d\cdot {n \choose {q'}}$ such that $C'$ has rate $\rho/2$, distance $\delta/2$.
\end{lemma}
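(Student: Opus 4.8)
The plan is to construct $C'$ from $C$ by a padding-and-blowup operation analogous to the one in Lemma~\ref{lemma:expo}, but where instead of duplicating coordinates we \emph{add} $q'$ new dummy coordinates and fold them into every test. Concretely, let $C\subseteq\bits^n$ with tester hypergraph $H=([n],E(H))$ of density $d$ (so $|E(H)|=dn$), distance $\delta$, rate $\rho$. Introduce $q'$ fresh coordinates; the new code $C'\subseteq\bits^{n+q'}$ consists of all words $(x,0^{q'})$ with $x\in C$ (alternatively, repeat a fixed codeword symbol — the exact choice only affects constants). Then $|C'|=|C|$, so over block length $n'=n+q'=n(1+o(1))$ the rate is $\rho(1+o(1))\ge \rho/2$ for $n$ large, and since the added coordinates are constant on $C'$ the relative distance is $\delta n/(n+q')\ge \delta/2$ similarly.

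Next I would build the tester for $C'$. For each original hyperedge $h=\{v_1,\dots,v_q\}\in E(H)$ with constraint $f_h$, and for each $q'$-subset $T=\{w_1,\dots,w_{q'}\}$ of the $n'$ coordinates (one could restrict to $q'$-subsets of the original $[n]$, or of all of $[n']$, or specifically include the new dummy coordinates — this governs the exact value of $d\cdot\binom{n}{q'}$ versus $d\cdot\binom{n'}{q'}$, and I would pick the variant matching the stated bound), create a $(q+q')$-query hyperedge $h\cup T$ whose constraint checks $f_h$ on the $h$-coordinates and checks that each coordinate in $T$ equals its fixed value (this is a LOF/linear-type constraint if the original was). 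The number of new hyperedges is $|E(H)|\cdot\binom{n}{q'}=dn\binom{n}{q'}$, so the density over $n'\approx n$ coordinates is $d\binom{n}{q'}$ as claimed.

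The substantive step is verifying that $C'$ is still locally testable with a comparable rejection probability. Given $y=(x,z)\in\bits^{n'}$ with $\dist(y,C')\ge \tau'$ for an appropriate $\tau'$ (e.g. $\delta/4$ or $\tau/2$), I would split into two cases: if $z\ne$ the fixed value on a noticeable fraction of the new coordinates, then a random new hyperedge detects this with probability $\Omega(q'/n')$ — here one must be careful, since a single bad dummy coordinate is hit with only $\Theta(1/n)$ probability, so to keep the rejection constant one should argue that either $x$ is already far from $C$ (handled by the original tester embedded in the new tests) or the mass of the discrepancy is on the dummy block, in which case one may want to weight or replicate the "dummy-check" part; alternatively, absorb this into the constants and only claim rejection probability $\eps' = \Omega(\eps)$ using that when $\dist(y,C')$ is large, $\dist(x,C)$ is $\Omega(1)$ so the original tests (which appear as sub-tests of the new tests, uniformly over the choice of $T$) reject with probability $\ge\eps$. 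If $z$ is (essentially) correct but $x$ is $\tau$-far from $C$, then since the $T$-part of a random new hyperedge is independent of its $h$-part, the probability that a random new hyperedge rejects is exactly $\Pr_{h}[f_h(x_h)=0]\ge\eps$. This gives $C'$ the $(\tau',\eps')$-test-hypergraph property with $\eps'=\Omega(\eps)$ and LOF constraints.

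The main obstacle I anticipate is precisely this bookkeeping around the rejection probability and the exact distance threshold: one wants $C'$ to be rejected with probability bounded below by a constant (independent of $n$) whenever it is $\Omega(1)$-far, and the naive inclusion of $\binom{n}{q'}$ near-identical tests per original test makes the "dummy coordinate wrong" event individually rare. The cleanest fix, which I would pursue, is to only ever charge rejection to the original sub-test $f_h$: note that for \emph{any} $y=(x,z)$, a uniformly random new hyperedge $h\cup T$ has its $h$ marginal uniform in $E(H)$, so $\Pr[\text{new edge rejects } y]\ge\Pr[\text{new edge's }f_h\text{-part rejects }x] = \Pr_{h}[f_h(x_h)=0]$, and then observe that $\dist(y,C')\ge\delta/4$ (say) forces $\dist(x,C)\ge\delta/4-q'/n\ge\tau$ for $n$ large, so the original LTC property yields rejection $\ge\eps$. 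This sidesteps the dummy-coordinate issue entirely at the cost of a slightly weaker distance guarantee, which is still $\Omega(\delta)$ as required. Finally I would observe that iterating/combining this with Theorem~\ref{thm:q} shows that any rate bound for $q$-LTCs of density $d n^{q-3}$ would, via this construction with $q'=1$ (lifting a $(q-1)$-LTC to a $q$-LTC and multiplying density by $\binom n 1 = n$), propagate down to rule out \ccc-LTCs, which is the point of the lemma.
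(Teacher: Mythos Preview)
There is a genuine gap in the construction. The paper does \emph{not} add only $q'$ new coordinates; it adds $n$ new coordinates, all forced to zero. This is why the rate and distance become exactly $\rho/2$ and $\delta/2$: the block length doubles. More importantly, the extra $q'$ queries in each new hyperedge are drawn from these $n$ new zero-coordinates, so there really are $\binom{n}{q'}$ choices of the set $T$ per original hyperedge $h$, and the constraint ``check that every coordinate in $T$ equals $0$'' is perfectly well-defined and LOF.

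Your version, with only $q'$ dummy coordinates, cannot produce $\binom{n}{q'}$ distinct meaningful hyperedges. You notice the tension yourself when you write that $T$ could range over ``$q'$-subsets of the original $[n]$, or of all of $[n']$, or specifically include the new dummy coordinates'': none of these options works. If $T$ must lie inside the $q'$ dummies there is exactly one such $T$, and the density stays $d$. If $T$ is allowed to contain original coordinates, then ``check that each coordinate in $T$ equals its fixed value'' is meaningless, since original coordinates have no fixed value across $C$; replacing that check by a vacuous constraint would destroy the LOF property the paper explicitly requires of its testers. The fix is exactly what the paper does: pad with $n$ zero-coordinates rather than $q'$, and let $T$ range over $q'$-subsets of those. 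Your subsequent analysis of testability (charging all rejection to the $f_h$ part, since the $h$-marginal of a uniform new hyperedge is uniform over $E(H)$) is correct in spirit and applies cleanly to the paper's construction.
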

\begin{corollary}
If there is a $3$-LTC with constant rate and density, then there are LTCs with $q>3$-queries, constant rate, and density $\Omega(n^{q-3})$.
\end{corollary}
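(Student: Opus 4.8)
The plan is to mimic the coordinate-duplication / edge-blow-up trick used in the proof of Lemma~\ref{lemma:expo}, but instead of duplicating every coordinate I would only pad each test hyperedge with $q'$ fresh ``dummy'' coordinates, one fresh set of $q'$ per original test, replicated across all ${n\choose q'}$ ways of choosing them. Concretely, start from the $q$-LTC $C\subseteq\bits^n$ with test-hypergraph $H=([n],E)$, density $d$ (so $|E|=dn$), distance $\delta$, rate $\rho$. Introduce the new code $C'\subseteq\bits^{2n}$ whose codewords are $(x,x)$ for $x\in C$ (each coordinate duplicated once, so that the dummy slots carry real information and $C'$ still has distance $\ge \delta$ relative to $2n$, i.e.\ $\ge\delta/2$; the rate becomes $\rho/2$). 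Now build the $(q+q')$-query tester for $C'$: for each $h=\{i_1,\dots,i_q\}\in E$ and each $q'$-subset $T=\{j_1,\dots,j_{q'}\}$ of the $2n$ coordinates, put in a hyperedge $h\cup T$ whose constraint is: the original LOF constraint $f_h$ on the first $q$ coordinates, AND for each $j\in T$ the equality constraint between $j$ and its duplicate partner. (If $T$ happens to hit both a coordinate and its partner, or overlaps $h$, just drop that combination or add a harmless degenerate constraint — this only changes constants.) The number of hyperedges is essentially $dn\cdot{2n\choose q'}=d\cdot\Theta(n^{q'})\cdot n$ over $2n$ vertices, so the density is $d\cdot\Theta(n^{q'})$, which I would express as $d\cdot{n\choose q'}$ up to the constant absorbed into ``density $d$''; and each new constraint is still LOF because equality constraints are linear and $f_h$ is LOF.

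The key step is verifying testability of $C'$ with the claimed parameters. Take $w=(y,z)\in\bits^{2n}$ with $\dist(w,C')\ge\delta/2$ (relative to $2n$). I split into two cases. If $y$ and $z$ are ``inconsistent'', i.e.\ $y$ and $z$ differ on a $\gamma$-fraction of coordinates for some $\gamma$ we get to choose (say $\gamma=\delta/4$), then a random new hyperedge $h\cup T$ rejects as soon as $T$ contains a coordinate where the duplicate-equality fails; since $T$ is a random $q'$-subset and a $\gamma$-fraction of the equality pairs are violated, this happens with probability $\ge$ some constant depending only on $q',\gamma$. If instead $y$ and $z$ agree on all but $<\delta/4$ of the coordinates, then $y$ itself is $\ge\delta/4$-far from $C$ (otherwise $(y,y)$, and hence $(y,z)$, would be close to $C'$, contradicting $\dist(w,C')\ge\delta/2$ once constants are chosen), so the original $(\delta/4,\eps')$-testability of $C$ — which we may assume after rescaling $\tau$, using that $C$ has distance $\delta$ — gives that $\Pr_h[f_h(y_h)=0]\ge\eps'$, and each such bad $h$ yields, for every $T$, a rejecting hyperedge $h\cup T$ of $H'$. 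Either way a random hyperedge of $H'$ rejects with constant probability $\eps/2$ (after bookkeeping), and codewords of $C'$ are accepted with probability one by construction. Hence $C'$ is a $(q+q')$-query $(\delta/2,\eps/2)$-LTC with density $d\cdot{n\choose q'}$, rate $\rho/2$, distance $\delta/2$, as claimed.

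The main obstacle is the testability argument, specifically pinning down the constants so that the two cases above genuinely cover all words $\delta/2$-far from $C'$ and so that the case split is clean — in particular choosing the consistency threshold $\gamma$ and relating ``$\dist(w,C')$ relative to $2n$'' to ``$\dist(y,C)$ relative to $n$'' without losing more than a constant factor, and making sure the degenerate $T$'s (those overlapping $h$ or containing a duplicate pair) are a vanishing fraction so they do not hurt the soundness constant. Given the lemma, the corollary is immediate: apply it with $q=3$ and $q'=q-3$ to a hypothetical $3$-LTC of constant rate and density ($d=\Theta(1)$), obtaining a $q$-query LTC with rate $\rho/2=\Omega(1)$, distance $\delta/2=\Omega(1)$, and density $\Theta(1)\cdot{n\choose q-3}=\Omega(n^{q-3})$, which is exactly the stated conclusion. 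I would state explicitly in the corollary's one-line proof that this is just ``$q'=q-3$ in Lemma~\ref{lemma:q}''.
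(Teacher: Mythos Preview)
Your derivation of the corollary from the lemma (take $q'=q-3$) is the same as the paper's, so the issue is only with your proof of Lemma~\ref{lemma:q}.

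There is a genuine gap in your construction of $C'$. You form hyperedges $h\cup T$ with $|h|=q$ and $|T|=q'$, and you want the constraint on $h\cup T$ to be ``$f_h$ holds on $h$, and for each $j\in T$, $w_j=w_{\mathrm{partner}(j)}$''. But $\mathrm{partner}(j)$ is, in general, \emph{not} an element of $h\cup T$, so this equality is not a function of the $q+q'$ queried symbols; it is not a local constraint at all. If instead you silently drop that check, then the $q'$ extra coordinates are unconstrained in every hyperedge, so (i) the constraints are no longer LOF (knowing all but one $T$-coordinate does not determine it), and (ii) your soundness case ``$y$ and $z$ inconsistent'' collapses, since the tester never compares $y$ to $z$. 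If you try to repair this by also querying $\mathrm{partner}(j)$ for each $j\in T$, the hyperedges have size $q+2q'$ rather than $q+q'$, and the lemma's conclusion no longer follows.

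The paper avoids this entirely with a simpler padding: add $n$ fresh coordinates that are \emph{forced to be zero}, and for each old hyperedge $e$ and each $q'$-subset $T$ of the \emph{new} coordinates, impose the constraint ``$f_e$ holds on $e$ and every coordinate in $T$ equals $0$''. This is genuinely a function of the $q+q'$ symbols in $e\cup T$, is LOF (the missing symbol is either determined by $f_e$ or must be $0$), and the soundness split is the clean one you wanted: either many of the new coordinates are nonzero (so a random $T$ hits one), or the old part is far from $C$ (so a random $e$ rejects). Your duplication idea can be salvaged, but only by making each $T$ a union of duplicate \emph{pairs}, which changes the arithmetic; the all-zero padding is the cleaner route.
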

The corollary shows that our upper bounds from Theorem~\ref{thm:q} are roughly correct in their dependence on $n$, but there is still a gap in the exponent.

\begin{proof}(of lemma)~
Imagine adding another $n$ coordinates to the code $C$ such that they are always all zero. Clearly the distance and the rate are as claimed. For the testability, we replace each $q$-hyper-edge $e$ of the hypergraph of $C$ with $n\choose {q'}$ new hyperedges that consist of the vertices of $e$ plus any $q'$ of the new vertices. The test associated with this hyperedge will accept iff the old test would have accepted, and the new vertices are assigned $0$. It is easy to see that the new hypergraph has average degree $d \cdot {n\choose q'}$. Testability can be shown as well.
\end{proof}

\subsection{Allowing weighted hypergraph-tests}\label{sec:nonweighted}
In this section we claim that when considering hyper-graph tests {\em with weights}, the density should not be defined as the ratio between the number of edges and the number of vertices. Perhaps a definition that takes the min-entropy of the graph into consideration would be better-suited, but this seems elusive, and we leave it for future work.

We next show that if one defines the density like before (ignoring the weights) then every LTC can be modified into one that has a maximally-dense tester. This implies that bounding the rate as a function of the density is the same as simply bounding the rate.

\begin{lemma}\label{lemma:nonweighted}
Let $C$ be a $q$-LTC with $q\ge 3$, rate $\rho$, distance $\delta$, and any density. Then there is another $q$-LTC $C'$ with a {\em weighted}-tester of maximal density $\Omega(n^{q-1})$ such that $C'$ has rate $\rho/2$, distance $\delta/2$.
\end{lemma}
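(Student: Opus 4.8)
The plan is to mimic the construction in Lemma~\ref{lemma:q}, but instead of appending $q'$ always-zero coordinates and taking subsets of them, I would append a single always-zero coordinate (or a few), duplicated many times, and then heavily reweight the tests so that the tester nominally ``uses'' an enormous number of hyperedges while still being supported on essentially the old tests. Concretely: start from the given $q$-LTC $C\subseteq\bits^n$ with its tester, add $n$ new coordinates all fixed to $0$, giving a code $C'\subseteq\bits^{2n}$ of rate $\rho/2$ and distance $\delta/2$ (the rate halves because the block length doubles while $\card{C'}=\card C$; the relative distance halves since absolute distances are unchanged but $n$ doubles). Now for the tester on $C'$: for each hyperedge $e$ in the old hypergraph, and for each of the $\binom{n}{q-1}$ (or $n^{q-1}$, up to constants) possible $(q-1)$-subsets of the new zero-coordinates, create a new hyperedge consisting of one old vertex of $e$ together with those new zero-coordinates — wait, more simply, I would create $\Omega(n^{q-1})$ distinct hyperedges per old test, each being a ``padded'' version touching the new coordinates, whose associated constraint just checks that the new coordinates are $0$ (these pass on $C'$ and reject any word differing on those coords), plus keep the old tests.

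The key point is the weighting: the old, ``real'' tests get essentially all the weight (a constant fraction), and the $\Omega(n^{q-1})\cdot \card E$ padding tests together get negligible weight, so that the soundness of the original tester is inherited up to a constant factor — a word far from $C'$ is either far from $C$ on the old coordinates (caught by the old tests with probability $\ge\eps$, hence $\ge \Omega(\eps)$ after reweighting) or differs from $0$ on a new coordinate (caught by padding tests). Since the ``density'' in the sense of this paper is $\card{E'}/(2n)$ with $E'$ the edge set ignoring weights, and $\card{E'}=\Omega(n^{q-1}\cdot\card E)\ge\Omega(n^{q-1})$ on $2n$ vertices, the density is $\Omega(n^{q-1})$, which is the maximum possible for a $q$-uniform hypergraph on $2n$ vertices (up to constants, since $\binom{2n}{q}=\Theta(n^q)$ edges over $2n$ vertices gives density $\Theta(n^{q-1})$). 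I would spell out the constraint hypergraph $H'=([2n],E')$ precisely, exhibit the weight function $\mu$, verify $\mu$ is a probability distribution, verify completeness ($f_h(x_h)=1$ for all $x\in C'$, $h\in E'$), and verify soundness at distance parameter $\delta/2$.

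The main obstacle, and the place to be careful, is the soundness analysis together with making the padding edges genuinely distinct so the unweighted count is as claimed. For soundness: given $w\in\bits^{2n}$ with $\dist(w,C')\ge\delta/2$, split $w=(u,z)$ with $u$ on the old block and $z$ on the new block. If $z\ne 0^n$, then a random padding edge touches a nonzero coordinate of $z$ with probability at least (fraction of nonzero coords of $z$ among new coords) up to the $1/q$ factor for the relevant slot, but this could be as small as $1/n$ if only one new coordinate is corrupted — so I must instead route through the old tests. The cleaner argument: if $\dist((u,z),C')\ge\delta/2$ then either $\dist(u,C)\ge\delta/2$ (old tests reject with prob $\ge\eps$ relative to the old-test weight, hence $\ge\Omega(\eps)$ overall), or $z\ne 0^n$ — but if $\dist(u,C)<\delta/2$ and $z$ has even one nonzero coordinate, is $w$ necessarily $\ge\delta/2$-far from $C'$? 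Not obviously, since we could be close to $C$ on $u$ and flipping back the few nonzero coords of $z$ is cheap. So I'd need: actually $\dist((u,z),C') \le \dist(u,C) + (\text{wt}(z)/(2n))$, hence $\dist((u,z),C')\ge\delta/2$ forces $\dist(u,C)\ge\delta/2 - \text{wt}(z)/(2n)$; if $\text{wt}(z)/(2n)\le\delta/4$ then $\dist(u,C)\ge\delta/4$ and a tester with parameter $(\delta/4,\eps')$ on the old tests catches it, otherwise $\text{wt}(z)\ge\delta n/2$ and a random padding edge's first ``new'' slot lands on a nonzero coordinate of $z$ with probability $\Omega(\delta)$. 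So I'd set up the old tester to be a $(\delta/4,\eps')$-tester (permissible, adjusting constants) and give the padding edges a small but non-negligible total weight $\Omega(1)$, and conclude rejection probability $\Omega(\min(\eps',\delta))$, which is a constant. The remaining routine points — that there are at least $\binom{n}{q-1}$ distinct padding edges per old test so $\card{E'}=\Omega(n^{q-1})$, and that the weighted density ignoring weights is therefore $\Omega(n^{q-1})$ which is maximal — I would state but not belabor.
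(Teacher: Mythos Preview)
Your approach is correct and in the same spirit as the paper's: append $n$ always-zero coordinates, flood the hypergraph with trivial ``check-zero'' tests, and use weights so that the real tests still carry constant mass. The paper's construction is, however, cleaner than what you sketch: it does not tie the padding edges to the old hyperedges at all, but simply adds \emph{every} $q$-subset of the $n$ new coordinates as a new test (with constraint ``all zero'') and splits the total weight $1/2$--$1/2$ between old and new tests. This immediately gives $\binom{n}{q}=\Theta(n^q)$ hyperedges on $2n$ vertices, hence density $\Theta(n^{q-1})$, with no need to count ``padding edges per old edge'' or worry about collisions between them. Your soundness case-split (either $\mathrm{wt}(z)\ge \delta n/2$, caught by a random check-zero test with probability $\Omega(\delta)$, or else $\dist(u,C)\ge \delta/4$, caught by the old tests) is exactly right and in fact more detailed than the paper, which simply asserts ``testability is maintained (with a different rejection ratio)''; the same case analysis works verbatim for the paper's simpler construction under the $1/2$--$1/2$ weighting.
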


\begin{corollary}\label{lemma:nonweighted}
Let $f:\mathbb{N}\to \mathbb{N}$ be any non-decreasing non-constant function. Any bound of the form $\rho \le 1/f(d)$ for weighted testers implies $\rho \le 1/f(n^{q-1})$, and in particular $\rho\to 0$.\qed
\end{corollary}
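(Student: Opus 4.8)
and its Corollary.}
The plan is to prove the Lemma by an explicit padding construction and then read off the Corollary in one line. Let $C\subseteq\bits^n$ be a $q$-LTC with rate $\rho$, distance $\delta$, and \emph{some} tester with parameters $(\tau,\eps)$ (we make no use of its density). Define $C'\subseteq\bits^{2n}$ by appending $n$ zero coordinates to every codeword, $C'=\sett{(x,0^n)}{x\in C}$. Since $x\mapsto(x,0^n)$ is injective, $|C'|=|C|$, so the rate of $C'$ is $\log|C'|/(2n)=\rho/2$; and two distinct codewords of $C'$ differ in at least $\delta n$ of the $2n$ coordinates, so the relative distance of $C'$ is at least $\delta/2$. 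This settles everything except testability and density.

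For the tester of $C'$ I would use a \emph{weighted} tester that with probability $1/2$ runs the original tester on the first $n$ coordinates, and with probability $1/2$ picks a uniformly random $q$-subset $h$ of the last $n$ coordinates and checks $w_h=0^q$. Completeness is immediate: a codeword $(x,0^n)$ passes the first test because $x\in C$, and passes every new test because its last $n$ coordinates vanish. For soundness, observe that the $C'$-codeword nearest to $w=(w^{(1)},w^{(2)})$ is $(x^*,0^n)$ where $x^*$ is the codeword of $C$ nearest to $w^{(1)}$, so $2n\cdot\dist(w,C')=\dist_H(w^{(1)},C)+\mathrm{wt}(w^{(2)})$. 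Hence if $\dist(w,C')\ge\tau$ then either $\dist_H(w^{(1)},C)\ge\tau n$, in which case the original tester rejects $w^{(1)}$ with probability $\ge\eps$ and the new tester rejects with probability $\ge\eps/2$; or $\mathrm{wt}(w^{(2)})\ge\tau n$, in which case a uniformly random $q$-subset of the last $n$ coordinates contains a nonzero coordinate with probability bounded below by a positive constant $c(\tau,q)$ (for $n$ large, by comparing the relevant falling factorials), so the new tester rejects with probability at least $c(\tau,q)/2$. Thus $C'$ is a $q$-LTC with parameters $(\tau,\tfrac12\min(\eps,c(\tau,q)))$, which are positive constants.

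It remains to count the density, and here is the whole point: although the new hyperedges together carry only weight $1/2$, the definition of density ignores weights, and the new hyperedges already number $\binom{n}{q}$, all distinct, on $2n$ coordinates. Hence the unweighted density of the test-hypergraph is at least $\binom{n}{q}/(2n)=\Omega(n^{q-1})$; since any $q$-query tester on $2n$ coordinates has at most $\binom{2n}{q}$ hyperedges, this is maximal up to a $q$-dependent constant, proving the Lemma. The Corollary is then immediate: if the bound $\rho\le 1/f(d)$ held for every $q$-LTC carrying a weighted tester of density $d$, applying it to the $C'$ built from an arbitrary $n$-coordinate $q$-LTC gives $\rho/2\le 1/f(\Omega(n^{q-1}))$, and since $f$ is non-decreasing and non-constant the right side tends to $0$ as $n\to\infty$ (absorbing the implicit constant into $f$ yields the stated form $\rho\le 1/f(n^{q-1})$); in particular no $q$-LTC with a weighted tester can have constant rate.

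I expect the only genuinely delicate point to be the soundness bookkeeping: keeping the two distance thresholds consistent across the doubling of the block length, and verifying that a word far from $C'$ must be far from $C$ on the first block or far from $0^n$ on the second block. The construction and the density estimate are routine; the real content is the (easy once noticed) observation that weights decouple ``how often a hyperedge is queried'' from ``how much a hyperedge contributes to density,'' so one can pack the hypergraph with essentially all $q$-subsets while querying the padding tests with negligible weight. (One could even avoid the padding entirely by adjoining trivially-accepting constraints on all $q$-subsets of the original $n$ coordinates, keeping rate $\rho$ and distance $\delta$; the padded version is presented only to keep every test ``meaningful''. The hypothesis $q\ge 3$ is not essential to the argument.)
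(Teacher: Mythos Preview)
Your proposal is correct and follows essentially the same construction as the paper: append $n$ all-zero coordinates, form a weighted tester that splits probability evenly between the original tests and all $q$-subsets of the padding (checking they are zero), and observe that the unweighted hyperedge count is $\binom{n}{q}$ so the density is $\Omega(n^{q-1})$. The paper merely asserts that ``testability is maintained (with a different rejection ratio)'' whereas you actually carry out the soundness case split, but the underlying idea is identical; the Corollary is then immediate in both treatments.
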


\begin{proof}(of lemma:)~
One can artificially increase the density of an LTC tester hypergraph $H$ by adding $n$ new coordinates to the code that are always zero, and adding all possible $q$-hyperedges over those coordinates (checking that the values are all-zero). All of the new hyper-edges will be normalized to have total weight one half, and the old hyperedges will also be re-normalized to have total weight one half. Clearly the rate and distance have been halved, and the testability is maintained (with a different rejection ratio). However, the number of hyperedges has increased to $n^q$ so the density is as claimed.
\end{proof}

\section*{Acknowledgement}
We would like to thank Oded Goldreich for very interesting discussions, and for pointing out the reduction in Section~\ref{sec:tradeoff}.

\end{document}